\newcommand*\patchAmsMathEnvironmentForLineno[1]{%
  \expandafter\let\csname old#1\expandafter\endcsname\csname #1\endcsname
  \expandafter\let\csname oldend#1\expandafter\endcsname\csname end#1\endcsname
  \renewenvironment{#1}%
     {\linenomath\csname old#1\endcsname}%
     {\csname oldend#1\endcsname\endlinenomath}}% 
\newcommand*\patchBothAmsMathEnvironmentsForLineno[1]{%
  \patchAmsMathEnvironmentForLineno{#1}%
  \patchAmsMathEnvironmentForLineno{#1*}}%
\newcommand{\Real}{\mathbb{R}}
\newcommand{\T}{^\mathrm{T}}
\newtheorem{theorem}{Theorem}
\newtheorem{proposition}[theorem]{Proposition}
\renewcommand{\max}{\mathrm{max}} % maximum
\renewcommand{\min}{\mathrm{min}} % minimum
\newcommand{\ext}{\mathrm{ext}} % extern
\newcommand{\cat}{\mathrm{cat}} % extern
\newcommand{\bio}{\sf{ }} % extern
\newcommand{\dt}{\mathrm{d}t}
\newcommand{\lc}{\left[}
\newcommand{\rc}{\right]}
\newcommand{\lp}{\left(}
\newcommand{\rp}{\right)}
\journal{Elsevier}
\begin{document}

\begin{frontmatter}

%% Title, authors and addresses

%% use the tnoteref command within \title for footnotes;
%% use the tnotetext command for theassociated footnote;
%% use the fnref command within \author or \address for footnotes;
%% use the fntext command for theassociated footnote;
%% use the corref command within \author for corresponding author footnotes;
%% use the cortext command for theassociated footnote;
%% use the ead command for the email address,
%% and the form \ead[url] for the home page:
%% \title{Title\tnoteref{label1}}
%% \tnotetext[label1]{}
%% \author{Name\corref{cor1}\fnref{label2}}
%% \ead{email address}
%% \ead[url]{home page}
%% \fntext[label2]{}
%% \cortext[cor1]{}
%% \address{Address\fnref{label3}}
%% \fntext[label3]{}

\title{Dynamic optimization of metabolic networks coupled with gene expression}

%% use optional labels to link authors explicitly to addresses:
%% \author[label1,label2]{}
%% \address[label1]{}
%% \address[label2]{}

\author[label1]{Steffen~Waldherr}
\address[label1]{Institute for Automation Engineering, 
	Otto von Guericke University Magdeburg, 
	Universit\"atsplatz 2, 39106~Magdeburg, Germany, steffen.waldherr@ovgu.de}
\author[label2]{Diego~A.~Oyarz\'un}
\address[label2]{Department of Mathematics, 
	Imperial College London, 
	SW7 2AZ London, United Kingdom}
\author[label3]{Alexander~Bockmayr}
\address[label3]{DFG Research Center \textsc{Matheon}, 
	Freie Universit\"at Berlin,
	Arnimallee 6, 14195~Berlin, Germany}
\begin{abstract}
The regulation of metabolic activity by tuning enzyme expression levels is crucial to sustain cellular growth in changing environments.
Metabolic networks are  often studied at steady state using constraint-based models and optimization techniques.
However, metabolic adaptations driven by changes in gene expression cannot be analyzed by steady state models, as these do not account for temporal changes in biomass composition.

Here we present a dynamic optimization framework that integrates the metabolic network with the dynamics of biomass production and composition.
An approximation by a timescale separation leads to a coupled model of quasi steady state constraints on the metabolic reactions, and differential equations for the substrate concentrations and biomass composition.
We propose a dynamic optimization approach to determine reaction fluxes for this model, 
explicitly taking into account enzyme production costs and enzymatic capacity.
In contrast to the established dynamic flux balance analysis, our approach allows predicting dynamic changes in both the metabolic fluxes and the biomass composition during metabolic adaptations.
Discretization of the optimization problems leads to a linear program that can be efficiently solved.

We applied our algorithm in two case studies: a minimal nutrient uptake network, and an abstraction of core metabolic processes in bacteria.
In the minimal model, we show that the optimized uptake rates reproduce the empirical Monod growth for bacterial cultures.
For the network of core metabolic processes, the dynamic optimization algorithm predicted commonly observed metabolic adaptations, such as a diauxic switch with a preference ranking for different nutrients, re-utilization of waste products after depletion of the original substrate, and metabolic adaptation to an impending nutrient depletion.
These examples illustrate how dynamic adaptations of enzyme expression can be predicted solely from an optimization principle.

\end{abstract}

\begin{keyword}
%% keywords here, in the form: keyword \sep keyword
 flux optimization \sep constraint-based methods \sep dynamic optimization \sep metabolic-genetic networks
%% PACS codes here, in the form: \PACS code \sep code
%% MSC codes here, in the form: \MSC code \sep code
%% or \MSC[2008] code \sep code (2000 is the default)
\end{keyword}

\end{frontmatter}

%% main text
\section{Introduction}
\label{sec:introduction}

A key aspect of cellular dynamics is the ability to adapt metabolic activity to changing environments.
This involves a dynamic re-organization of enzyme expression levels, in order to accommodate for variability in nutrient abundance and environmental shocks that have a deleterious impact on growth. 
These adaptations emerge from a complex array of regulatory interactions between metabolism and the genetic machinery. 
Since many of these interactions are unknown or incompletely understood, a fully mechanistic grasp of how they control metabolic adaptations is currently beyond our reach. 
Moreover, the analysis of large-scale mechanistic models is typically hampered by the high number of molecular species and parameters involved.

An alternative approach to predict metabolic adaptations is to assume an underlying optimality principle \cite{Watson1986,FellSma1986,VarmaPal1994}. 
Numerous studies have considered metabolic adaptations in microbes by computing optimal metabolic fluxes in a stoichiometric model under a suitable objective function \cite{RielGiu2000,CovertPal2002,MeadowsKar2010,SteuerKno2012}.
Stoichiometric models are a structural description of a metabolic network and cannot provide information on the enzyme concentrations.
% that operate a metabolic network.
Several approaches have attempted to overcome this by integrating gene regulation with stoichiometric models, either by modelling enzyme expression qualitatively with Boolean variables describing regulatory effects \cite{CovertSch2001,CovertPal2002}, or by explicitly including enzyme capacity constraints in the optimization problems \cite{GoelzerFro2011}.
An alternative approach are the cybernetic models \cite{RamkrishnaSon2012}, where regulation is explicitly modelled and assumed to optimize a cellular objective.
While the classical cybernetic approach explicitly includes reaction kinetics in the model, a hybrid approach has been suggested \cite{KimVar2008}, where the rates are determined from flux balance analysis.

Models that integrate metabolism and gene expression can potentially yield better predictions than those focused on metabolism in isolation.
This can be particularly helpful in metabolic adaptations caused by environmental fluctuations.
To capture the dynamics of biomass and gene expression linked to metabolic activity,
previous studies have mostly used ad-hoc combinations of various modeling frameworks.
Examples are combinations of constraint-based steady state models with ordinary differential equations or Boolean regulatory logic \cite{VarmaPal1994,CovertPal2002,CovertXia2008}.
This combination approach has been successful in proposing integrated models up to whole-cell dynamics \cite{KarrSan2012}.

In this paper, we propose a dynamic modeling framework for metabolic networks coupled with gene expression of enzymes and production of other macromolecules.
We develop an optimization algorithm to predict optimal time courses for nutrient uptake, metabolic fluxes, and gene expression rates in such networks.

The classical approach to constraint-based optimization of metabolic fluxes, commonly called Flux Balance Analysis (FBA), relies on an optimization problem with algebraic constraints stemming from a steady state restriction \cite{VarmaPal1994a,ReedPal2003,OrthThi2010}. 
Mathematically, the FBA approach in the simplest form leads to a linear program of the form
\begin{equation}
  \label{eq:fba-lp}
  \max_v \{c\T v \mid  S v = 0, \enspace v_{\min} \leq v \leq v_{\max}\},  
\end{equation}
where $v$ is the reaction flux vector, $c$ a biomass weighting vector, $S$ the stoichiometric matrix, and $v_{\min}$, $v_{\max}$ are lower and upper component-wise bounds on the fluxes, respectively.
While the most common optimization objective is the maximization of biomass production, an experimental evaluation also highlighted additional biologically relevant objectives \cite{SchuetzKue2007}.

One point of critique to FBA is its coarse description of the biomass composition.
While growth-dependent changes in the biomass composition have been taken into account in the past \cite{PramanikKea1997}, constraints related to the actual biomass composition by enzymes or other cellular macromolecules are usually not considered. 
At least on the level of individual metabolic pathways, there is good evidence that the enzyme production cost is an important factor in the regulation of these pathways \cite{WesselyBar2011}.
Thus, it seems plausible that the inclusion of biomass composition and enzyme costs in metabolic optimization can potentially improve the quality of its predictions.
As an extension to FBA in this direction, the resource balance analysis (RBA) approach has been proposed \cite{GoelzerFro2011}. 
This includes the conversion of metabolites into specific enzymes and other proteins in the network, and adds the enzymatic capacity as constraint on metabolic fluxes for the optimization.
RBA yields a linear optimization problem and can intrinsically describe changes in both the growth rate and biomass composition due to environmental changes from an optimization principle alone.
A conceptually equivalent approach has been proposed independently in \cite{LermanHyd2012} under the term ME (metabolism and macromolecular expression) model.
Both approaches are however limited to situations of steady exponential growth.

For batch processes or in changing environments, the model needs to go beyond the stationary approach and account for dynamic changes in metabolic activity.
FBA has been used to predict dynamic changes in biomass and nutrients using iterative approaches \cite{VarmaPal1994}.
However, the iterative optimization uses a steady state constraint and does not account for the model dynamics, and thus the predictions may not be optimal in changing environments.
Dynamic effects are physiologically important, as is evidenced by the experimental observation that even in steady state, cells would show flux distributions which are slightly suboptimal, but which allow for easier transitions to other environmental conditions \cite{SchuetzZam2012}.
Also, the numerical accuracy of the iterative approach can at best be evaluated heuristically or by numerical experimentation, unless specialized numerical algorithms are applied \cite{HoffnerHar2013}.

By formulating an appropriate dynamic optimization problem, it is possible to compute optimal fluxes over the whole time range of interest.
This approach has been proposed in dynamic flux balance analysis (dFBA) \cite{MahadevanEdw2002}. 
In dFBA, one can distinguish between a ``static optimization approach (SOA)'', similar to the previously used iterative FBA \cite{VarmaPal1994}, and a ``dynamic optimization approach (DOA)''.
The static approach is useful to get feasible nutrient and biomass dynamics under metabolic constraints, but it cannot resolve the optimization problem over the complete timescale of interest.
The dynamic approach DOA directly considers an objective function which depends on the dynamics over the complete timescale, potentially under dynamic metabolic constraints, and thus provides a consistent solution to the dynamic optimization problem.
However, in the same way as classical FBA, dynamic FBA uses only a coarse description of biomass composition.
Biomass is captured only as one component, and different allocations of biomass to different metabolic tasks, such as considered in RBA, cannot be represented. 

In the study described here, we developed a mathematical framework for dynamic models of coupled metabolism and gene expression.
We denote such models with the term \emph{metabolic-genetic networks}.
From a rigorous timescale separation, we approximate this model by a quasi steady state, constraint-based part for the intracellular metabolism, and a dynamic part for the evolution of biomass and substrate concentrations.
For this model class, we developed a dynamic optimization approach, called \emph{dynamic enzyme-cost FBA} (deFBA), that includes a detailed description of biomass and accounts for the enzyme cost.
The deFBA method respects biophysical constraints motivated from resource balance analysis \cite{GoelzerFro2011}, and additionally includes dynamic changes in biomass composition and substrate concentrations.

In Table~\ref{tab:fba-scheme} we compare our deFBA approach to the established methods available in the literature.
The distinction among the methods is based on two criteria: one for the type of optimization approach (static, iterative, or dynamic) and one for whether the enzyme production cost is taken into account for the optimization or not.
The comparison highlights the improved generality of deFBA compared to established methods with respect to these two criteria.

\begin{table}[tbp]
\caption{
\bf{Different flux optimization approaches}}
\begin{footnotesize}
  \begin{tabular}{l||c|c|c|}
    \bf{Optimization approach} & No enzyme cost & Enzyme cost included \\ \hline
    \bf{static} & FBA \cite{VarmaPal1994a} & RBA \cite{GoelzerFro2011}, ME networks \cite{LermanHyd2012} \\
    \bf{iterative} & iFBA \cite{VarmaPal1994}, dFBA (SOA) \cite{MahadevanEdw2002} & \\
    \bf{dynamic}  & dFBA (DOA) \cite{MahadevanEdw2002} & deFBA \emph{(this paper)}
  \end{tabular}
\end{footnotesize}
\begin{flushleft}
\end{flushleft}
\label{tab:fba-scheme}
\end{table}

We applied the deFBA method to two exemplary metabolic-genetic networks, a minimal nutrient uptake network, and a larger network modeled that describes core cellular processes in bacteria.
For the example of a minimal metabolic-genetic network, we evaluated the dynamics resulting from an optimization with a set of biologically meaningful objective functionals.
Using a Michaelis-Menten reaction rate for the substrate uptake, we showed that the minimal metabolic-genetic network is equivalent to the empirical Monod growth kinetics. 
We also observed a close similarity between optimal solutions and Monod kinetics when minimizing the time for substrate metabolization or maximizing the biomass integral with deFBA.
We argue that this observation supports the biological validity of these objective functionals.

For the larger network of core cellular processes, we focussed on the discounted biomass integral as objective function.
We analyzed different scenarios of nutrient availability which are relevant in a biotechnological setting, including the switch from one carbon source to another and growth under oxygen limitation.
With the dynamic optimization approach, we observed clearly distinguished growth phases, obtained biologically reasonable adaptation dynamics upon changes in nutrient availability, and could predict different dynamic biomass compositions of the cells depending on the growth conditions.

In summary, the proposed deFBA method for metabolic networks coupled with gene expression allows us to infer dynamic adaptations of the cellular metabolic state from biophysical capacity constraints under an optimality principle.
The approach can predict metabolic changes ocurring in cellular adaptations to a dynamic environment, without knowledge of the involved regulatory mechanisms.

\section{Modeling and optimization of metabolic-genetic networks}
\label{sec:results}

\subsection{Model construction}
\label{sec:model-metab-genet}

Our dynamic optimization algorithm is based on a dynamic mass balance model of a metabolic-genetic network. 
We modeled metabolic-genetic networks with three types of molecular species:
\begin{itemize}
\item Extracellular nutrients and waste, with the molar amount vector $Y$;
\item Intracellular metabolites, with the molar amount vector $X$;
\item Macromolecules like gene products or large metabolites forming cellular building blocks, with the molar amount vector $P$.
\end{itemize}
We split the network reactions accordingly into three classes:
\begin{itemize}
\item Exchange reactions, with fluxes $V_y$, between the cell and the environment;
\item Metabolic reactions, with fluxes $V_x$, converting one set of metabolites into another one;
\item Biomass reactions, with fluxes $V_p$, converting metabolites into macromolecules or vice versa, for example gene expression or anabolic reactions.
\end{itemize}

We denote the vector of all reaction fluxes as
\begin{equation}
  \label{eq:flux-vector}
  V = (V_y, V_x, V_p)\T.
\end{equation}

We assumed two general properties of such networks, lead to the time scale separation employed in this paper (described in the Section~\ref{sec:time-scale-appr}):
% potential refs: \cite{RosenfeldYou2005}, \cite{CovertXia2008}
\begin{itemize}
\item Each macromolecule is composed of a large number of small metabolites.
For example, a simple production reaction for a macromolecule $P$ from a single metabolite $X$ may be represented as $\alpha X \rightarrow P$, where $\alpha$ is a large stoichiometric coefficient.
\item The biomass reactions, i.e., the production of macromolecules, get proportionally slower as the relative stoichiometry $\alpha$ of macromolecules to metabolites increases.
This property is accounted for by scaling the biomass reaction fluxes $V_p$ with a small dimensionless factor $\varepsilon$ in the dynamic model.
\end{itemize}

For the purpose of timescale separation, we assumed that each reaction flux can be expressed as a time-varying function $V_i(t,y,x,P)$ of the species concentrations, 
where $y$ and $x$ are the concentrations of extracellular species and intracellular metabolites, respectively.
The explicit time-dependence of $V_i$ represents modulation of enzyme activity and gene expression due to cellular signalling.
Assuming that the extracellular volume $\vartheta_e$ is constant in time and the cellular volume $\vartheta_c(t,P)$ is a time-varying function of the amount of macromolecules $P$, the concentrations $x$ and $y$ are defined as
\begin{equation}
  \label{eq:concentrations}
  y = \frac{Y}{\vartheta_e} \qquad 
  x = \frac{X}{\vartheta_c(t,P)}.
\end{equation}
The reaction fluxes are considered to be given in units of molar amount per time, as is generally recommended for models with multiple compartments, for example in the SBML specification \cite{HuckaFin2003}.
This is reflected by the assumption that the reaction flux $V_i(t,y,x,P)$ depends on the molar amount of the enzymes, and not on their concentration.

For a general network of this type, we derived the differential equations for the network dynamics from mass balancing as
\begin{equation}
  \label{eq:metabolic-genetic-network}
  \begin{aligned}
    \dot{Y} &= - S^y_y \,V_y \\
    \dot{X} &= S^x_y \,V_y + S^x_x \,V_x - \alpha \;\varepsilon\, S^x_p\, V_p \\
    \dot{P} &= \varepsilon\, S^p_p\, V_p,
  \end{aligned}
\end{equation}
where the matrices $S^i_j$, $i,j \in \{ x, y, p \}$ describe the stoichiometry of species $i$ in reactions $V_j$.

\subsection{Timescale approximations of metabolic-genetic networks}
\label{sec:time-scale-appr}

\subsubsection{Transformation to the singular perturbation normal form}

Based on the previous transformation~\eqref{eq:concentrations} to units of concentration for the extracellular species $y$, we rewrote the metabolic-genetic network model~\eqref{eq:metabolic-genetic-network} as
\begin{equation}
  \label{eq:metabo-gen-net-intensive}
  \begin{aligned}
    \dot{y}(t) &= - \frac{1}{\vartheta_e}\, S^y_y\, V_y(t,y,x,P) \\
    \dot{X}(t) &= S^x_y\, V_y(t,y,x,P) + S^x_x\, V_x(t,x,P) - \alpha\, \varepsilon\, S^x_p\, V_p(t,x,P) \\
    \dot{P}(t) &= \varepsilon\, S^p_p\, V_p(t,x,P),
  \end{aligned}
\end{equation}
where $x = X/\vartheta_c(t,P)$.
Importantly, we did not transform the intracellular metabolic state $X(t)$ to concentrations in the left hand side of~\eqref{eq:metabo-gen-net-intensive}, in order to avoid the non-linearity in the differential equations that would result from a time-varying volume during growth.

Based on the assumptions in Section~\ref{sec:model-metab-genet}, the time scale separation is expressed mathematically as the limit $\alpha \rightarrow \infty$, $\varepsilon \rightarrow 0$, with the product $\alpha \varepsilon$ staying constant.
Since $\alpha$ describes the ratio of biomass molarity to the consumed nutrient molarity, when $\alpha$ tends to infinity with a finite initial nutrient supply $Y(0)$, the model would have a trivial solution where no biomass can be produced.
This problem can be avoided by assuming that the extracellular volume $\vartheta_e \rightarrow \infty$, with both the product $\varepsilon \vartheta_e$ and the extracellular nutrient concentrations $y$ remaining finite, i.e., the total nutrient amount is in proportion with the achievable biomass.
In summary, we assume 1) a large ratio for the specific molecular mass of macromolecules to metabolites, 2) slow reactions that produce macromolecules, and 3) a large extracellular volume compared to the cellular volume.

A long time scale is then defined by
\begin{equation}
  \label{eq:long-timescale}
  T = \varepsilon\, t.
\end{equation}
Assuming that the reaction rates $V_i(t,y,x,P)$ and the cellular volume $\vartheta_c(t,P)$ are slowly varying with respect to their explicit dependence on $t$, we transformed the model~\eqref{eq:metabo-gen-net-intensive} to the long time scale (see equation~\eqref{eq:net-long-timescale} in~\ref{sec:derivation-long-time-scale}).
This assumption will generally be valid for regulation of enzymatic activity by slowly changing environmental conditions, gene expression, and changes in cellular morphology, as these are expected to act on a similar time scale as the accumulation of macromolecules $P$.
Note that the framework can nevertheless accomodate for fast allosteric regulation, since that would typically be modelled by time-invariant kinetics depending on $x$ only.
As an example, a typical rate model for an enzymatic reaction with enzyme $P$, substrate $x_1$, and allosteric inhibitor $x_2$ is given by $V = V_{max}(P) x_1/((K_1 + x_1) (K_2 + x_2))$ \cite{Fall2002}, which does not depend explicitly on time and thus is trivially ``slowly varying'' in the sense required here.

With a slight abuse of notation, we will represent $V_i$ and $\vartheta_c$ being slowly varying in the following by writing $V_i(T,y,x,P)$ and $\vartheta_c(T,P)$.

\subsubsection{Derivation of a quasi steady state model}
\label{sec:rigor-deriv-quasi}

The model on the long time scale is in the standard form to perform singular perturbation by Tikhonov's theorem \cite{Khalil2002}.
To apply this theorem, the following conditions need to be verified:
\begin{description}
\item[Condition 1] The quasi steady state equation in the limit $\varepsilon \rightarrow 0$, given by setting the right hand side of $\dot X$ to zero in~\eqref{eq:metabo-gen-net-intensive},
  needs to have a locally unique solution 
  \begin{equation}
    \label{eq:quasi-steady-state}
    X = q(T,y,P)
  \end{equation}
  for all admissible values of $T$, $y$, and $P$.
\item[Condition 2] The quasi steady state~\eqref{eq:quasi-steady-state} needs to be an exponentially stable steady state of the boundary layer model (i.e., the fast dynamics of~\eqref{eq:metabo-gen-net-intensive}), which is given by
  \begin{equation}
    \label{eq:boundary-layer}
    \dot X = S^x_y\, V_y(T,y,x,P) + S^x_x\, V_x(T,x,P) - \alpha\, \varepsilon\, S^x_p\, V_p(T,x,P),
  \end{equation}
  uniformly in $T$, $y$, and $P$.
  Note that for stability analysis, $T$, $y$ and $P$ are considered as constants in~\eqref{eq:boundary-layer}.
\end{description}

For the technical details behind these conditions, we refer to \cite[chapter 11]{Khalil2002}.

If these conditions hold, one can approximate solutions of the original model by a reduced model, where the fast dynamics $\dot X$ are considered to be in quasi steady state, and the variable $X$ is replaced by its quasi steady state solution~\eqref{eq:quasi-steady-state} in the dynamics of the slow variables $y$ and $P$.
For a rigorous presentation of the reduced model, see equation~\eqref{eq:reduced-model-long-timescale} in~\ref{sec:derivation-long-time-scale}.
Solutions of the original model on the long time scale are denoted by $(y(T,\varepsilon),\ X(T,\varepsilon),\ P(T,\varepsilon))$, and solutions of the reduced model by $(y(T,0),\ P(T,0))$.
Application of Tikhonov's theorem \cite[Theorem 11.1]{Khalil2002} then yields the following result.

\begin{theorem}
  \label{theorem:tikhonov}
  If the metabolic-genetic network model satisfies Conditions 1 and 2 above and the reduced model~\eqref{eq:reduced-model-long-timescale} has a unique solution $(y(T,0),\ P(T,0))$,
  then there exists $\varepsilon^\ast > 0$ such that for all $\varepsilon < \varepsilon^\ast$,
  the original model~\eqref{eq:net-long-timescale} has a unique solution $(y(T,\varepsilon),\ X(T,\varepsilon),\ P(T,\varepsilon))$ on a finite time interval $0 \leq T \leq T^\ast(\varepsilon)$, and this solution satisfies the bound
  \begin{equation}
    \label{eq:qssa-solution-bound}
    \| y(T,\varepsilon) - y(T,0) \| + \| P(T,\varepsilon) - P(T,0) \| = \mathcal{O}(\varepsilon).
  \end{equation}
  In addition, for any $T^{\ast\ast} > 0$, there exists $\varepsilon^{\ast\ast}$ with $0 < \varepsilon^{\ast\ast} < \varepsilon^\ast$ such that the solution in the fast variable satisfies
  \begin{equation}
    \label{eq:fast-variable-bound}
    \| X(T,\varepsilon) - q(T,y(T,0),P(T,0)) \| = \mathcal{O}(\varepsilon)
  \end{equation}
  for all $T \in [ T^{\ast\ast}, T^\ast(\varepsilon) ]$ and $\varepsilon < \varepsilon^{\ast\ast}$.
\end{theorem}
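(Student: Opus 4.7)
The plan is to recognize that, once the model has been put on the long time scale via $T = \varepsilon t$, equation~\eqref{eq:net-long-timescale} is precisely in the standard singular perturbation normal form, and that Conditions~1 and~2, together with the assumed existence and uniqueness of a solution of the reduced model~\eqref{eq:reduced-model-long-timescale}, match the hypotheses of Tikhonov's theorem \cite[Theorem 11.1]{Khalil2002} one-to-one. The proof then consists almost entirely of verifying this correspondence and invoking that theorem.

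First I would identify the slow variables $(y,P)$ and the fast variable $X$ in~\eqref{eq:net-long-timescale}: the right-hand sides of $\mathrm{d}y/\mathrm{d}T$ and $\mathrm{d}P/\mathrm{d}T$ are $\mathcal{O}(1)$ on the long time scale, while the $X$-equation takes the form $\varepsilon\,\mathrm{d}X/\mathrm{d}T = S^x_y V_y + S^x_x V_x - \alpha\varepsilon S^x_p V_p$, so that setting $\varepsilon = 0$ produces exactly the algebraic constraint whose locally unique root $X = q(T,y,P)$ is postulated in Condition~1. This gives the first hypothesis of Tikhonov's theorem. The second hypothesis, uniform exponential stability of $X = q(T,y,P)$ as a steady state of the boundary-layer system~\eqref{eq:boundary-layer} with $(T,y,P)$ frozen, is exactly Condition~2. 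The third hypothesis, existence of a unique solution $(y(T,0),P(T,0))$ of the reduced model on the interval of interest, is assumed in the statement of the theorem. The smoothness requirements on $V_y$, $V_x$, $V_p$, $\vartheta_c$, and $q$ needed to apply Tikhonov's theorem are inherited from the slowly-varying assumptions used to derive~\eqref{eq:net-long-timescale} in~\ref{sec:derivation-long-time-scale}.

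Having verified the three hypotheses, I would invoke Tikhonov's theorem directly to obtain the existence of $\varepsilon^\ast > 0$ and of a unique solution of~\eqref{eq:net-long-timescale} on $[0, T^\ast(\varepsilon)]$ for every $\varepsilon < \varepsilon^\ast$, together with the slow-variable bound~\eqref{eq:qssa-solution-bound}, which holds uniformly on $[0, T^\ast(\varepsilon)]$. The fast-variable bound~\eqref{eq:fast-variable-bound} requires one extra remark: because the initial condition $X(0)$ is generally off the slow manifold, $X(T,\varepsilon)$ is only guaranteed to be $\mathcal{O}(\varepsilon)$-close to $q(T,y(T,0),P(T,0))$ after the boundary-layer transient has decayed. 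The exponential stability assumed in Condition~2 ensures that for any $T^{\ast\ast} > 0$ one can pick $\varepsilon^{\ast\ast} \in (0,\varepsilon^\ast)$ small enough that the transient is essentially completed by time $T^{\ast\ast}$, which yields~\eqref{eq:fast-variable-bound} for $T \in [T^{\ast\ast}, T^\ast(\varepsilon)]$; this is exactly the second conclusion of Tikhonov's theorem.

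The only real obstacle is bookkeeping rather than mathematics: one must check that the slowly-varying time-dependence carried by $V_i(T,y,x,P)$ and $\vartheta_c(T,P)$ does not spoil the uniformity in $T$ that Tikhonov's theorem requires in Condition~2 and in the smoothness of $q$. This is precisely the role of the ``slowly varying'' assumption made when passing from~\eqref{eq:metabo-gen-net-intensive} to~\eqref{eq:net-long-timescale}, so once that derivation is accepted, the rest of the argument is a direct quotation of~\cite[Theorem 11.1]{Khalil2002}.
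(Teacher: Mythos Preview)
Your proposal is correct and matches the paper's approach exactly: the paper does not give a separate proof of this theorem but simply states it as a direct application of Tikhonov's theorem \cite[Theorem 11.1]{Khalil2002}, after having cast the model in the singular perturbation normal form and identified Conditions~1 and~2 with the required hypotheses. If anything, your write-up is more detailed than the paper's own treatment, which consists of the single sentence ``Application of Tikhonov's theorem \cite[Theorem 11.1]{Khalil2002} then yields the following result.''
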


Theorem~\ref{theorem:tikhonov} says that solutions of the reduced model~\eqref{eq:reduced-model-long-timescale} can be used to approximate the slow variables of the original model~\eqref{eq:net-long-timescale} for small values of $\varepsilon$,
and the quasi steady state solution $q(T,y,P)$ can be used to approximate the fast variable after a short transient.
The approximation requires that the boundary layer model, i.e., the dynamics of the metabolic network, has a unique, exponentially stable steady state.
Note that the approximation is generally valid on a finite time scale only, but this does not pose a problem here, since we are considering optimization over a finite time span.

Regarding Condition 1, explicit conditions for the existence of a unique steady state have been presented in \cite{OyarzunSta2012,GoelzerFro2014} for the special case of unbranched metabolic pathways with specific feedback regulatory mechanisms.
Regarding Condition 2, the stability analysis of metabolic networks is a field of active research, with most stability conditions being only applicable to networks with simple stoichiometries \cite{MeslemFro2011,OyarzunSta2012}.

In order to show rigorously that a given metabolic-genetic network can be well approximated by the quasi steady state model derived here, one would have to check Conditions 1 and 2 based on a kinetic model.
These conditions, however, are hard to check in realistic networks because enzyme kinetics are not always known and because it may not be possible to solve for the steady state of $X$.
As is commonly done in constraint-based models, we typically need to assume existence and stability of a quasi steady state based on biophysical insight.
However, there are cases where unstable dynamics have been shown for metabolic processes, for example oscillations in glycolysis \cite{MadsenDan2005}.

\subsubsection{A minimal metabolic-genetic network for nutrient uptake}
\label{sec:minim-metab-genet}

As a minimal example for a metabolic-genetic network, we considered a nutrient uptake network composed of one nutrient $Y$, one intracellular metabolite $X$, and one gene product $P$.
The minimal network consists of an uptake reaction $V_y$ and a biomass reaction $V_p$ as follows:
\begin{equation}
  \label{eq:minimal-network}
  \begin{aligned}
    V_y: &\ \;\;\;Y&\rightarrow X: &\qquad V_y(t,y,P) = P f_y(y) \\
    V_p: &\ \alpha X&\rightarrow P: &\qquad V_p(t,x,P) = P f_p(x),
  \end{aligned}
\end{equation}
where each reaction rate $V_{y}$, $V_p$ is split into the amount of gene product $P$ acting as enzyme and a kinetic term $f_{y}$, $f_p$, respectively.
With the classical Michaelis-Menten model, one would for example have $f_{y}(y) = k_{\cat} y / (K_y + y)$.

The quasi steady state constraint on the long timescale obtained from setting dynamics of $X$ to zero is given by
\begin{equation}
  \label{eq:quasi-steady-state-minimal}
  P f_y(y) = \alpha \varepsilon P f_p\left(\frac{X}{\vartheta_c(T,P)}\right)
\end{equation}
for this network.
If $f_p$ is invertible in the relevant domain and $P \neq 0$, a quasi steady state solution~\eqref{eq:quasi-steady-state} for $X$ is computed as
\begin{equation}
  \label{eq:quasi-steady-state-solution-minimal}
  q(T,y,P) = f_p^{-1}\left( \frac{f_y(y)}{\alpha \varepsilon} \right) \vartheta_c(T,P).
\end{equation}
The boundary layer model~\eqref{eq:boundary-layer} is 
\begin{equation}
    \label{eq:boundary-layer-minimal}
    \dot X = P\, f_y(y) - \alpha\, \varepsilon\, P\, f_p\left(\frac{X}{\vartheta_c(T,P)}\right).
\end{equation}
Since the boundary layer model describes the short time scale, $P$, $y$, and $T$ can be considered as constants.
A stability analysis of the linear approximation shows that the steady state is locally exponentially stable uniformly in $T$, $y$, and $P$ if there exists a constant $\delta > 0$ independent of $T$, $y$, and $P$, such that
\begin{equation}
  \label{eq:minimal-model-stability-condition}
  \frac{\alpha\, \varepsilon\, P}{\vartheta_c(T,P)} f_p^\prime\left( \frac{q(T,y,P)}{\vartheta_c(T,P)} \right) > \delta.
\end{equation}
This condition will be satisfied if $P / \vartheta_c(T,P)$, i.e., the gene product concentration, is bound away from zero and the function $f_p(x)$ is strictly increasing in $x$.
Assuming a non-vanishing gene product concentration, the stability condition is satisfied for any kinetics $f_p(x)$ that is strictly increasing in the substracte concentration $x$.

Finally, the approximate model for the minimal network~\eqref{eq:minimal-network} on a long time scale is given by
\begin{equation}
  \label{eq:reduced-minimal-model-fluxes}
  \begin{aligned}
    \dot{Y} &= - V_y \\
    \dot{P} &= \varepsilon V_p,
  \end{aligned}
\end{equation}
together with the quasi steady state constraint 
\begin{equation}
  \label{eq:minimal-qss-equation}
  V_y - \alpha\, \varepsilon\,V_p = 0.
\end{equation}
Note that we have gone back to units of molar amounts for the extracellular metabolites $Y$ and the original time scale $t$ for easier biological interpretation.

\subsection{Dynamic optimization in metabolic-genetic networks}
\label{sec:dynam-optim-metab}

We developed a new dynamic optimization approach to predict the time-courses of fluxes, substrate concentrations, and biomass in metabolic-genetic networks, which we call \emph{dynamic enzyme-cost FBA} (deFBA).
We included constraints on enzyme capacity and biomass composition as in resource balance analysis (RBA) \cite{GoelzerFro2011} or ME networks \cite{LermanHyd2012}, but considered a dynamic flux optimization problem as in the dynamic approach of dFBA \cite{MahadevanEdw2002}.
%Based on the timescale separation discussed in Section~\ref{sec:time-scale-appr}, we considered the following model:

While in principle it is possible to apply a constraint-based dynamic flux optimization directly on the full metabolic-genetic network model~\eqref{eq:metabolic-genetic-network},
the time scale separation presented in Section~\ref{sec:time-scale-appr} can greatly help in reducing the complexity of the optimization problem and thus increase the efficiency of the solvers.
First, the time scale separation reduces the dimensionality of the problem by removing the metabolic variables $X$; second, optimizing only on the slow dynamics allows for a coarser time discretization grid.

In the reduced model~\eqref{eq:reduced-model}, we still face the problem that we typically do not know the exact kinetics for the reaction fluxes $V(T,y,x,P)$.
We can circumvent this problem by considering the fluxes $V_i$ as free variables in the optimization, as is commonly done in constraint based models.

As shown in Section~\ref{sec:rigor-deriv-quasi} and~\ref{sec:derivation-long-time-scale}, the reduced model to be used in the optimization is constructed as
\begin{equation}
  \label{eq:reduced-model-fluxes}
  \begin{aligned}
    \dot{Y} &= - S^y_y \,V_y \\
    \dot{P} &= \varepsilon \,S^p_p\, V_p,
  \end{aligned}
\end{equation}
together with the quasi steady state constraint
\begin{equation}
  \label{eq:quasi-steady-state-flux-equation}
  S^x_y\, V_y + S^x_x\, V_x - \alpha\, \varepsilon\, S^x_p \,V_p = 0.
\end{equation}
In \eqref{eq:reduced-model-fluxes} and \eqref{eq:quasi-steady-state-flux-equation}, the reaction flux vectors $V_y$, $V_x$, and $V_p$ are considered as free time-dependent variables to be used in the dynamic optimization, subject to constraints described later in this section.

In order to develop the optimization problem, we need to further elaborate on the structure of the metabolic-genetic networks.
Most reactions in the network will be catalyzed by an enzyme, which needs to be included in the vector of macromolecular species $P$.
For ease of notation, we will assume that the first $m$ components of $P$ correspond to the network's enzymes, and the remaining components to non-enzymatic macromolecules.
Each enzyme catalyzes a set of one or more reactions.
The set of reactions catalyzed by enzyme $P_i$ is denoted by the set of indices
\begin{equation}
  \label{eq:cat-indices}
  \mathrm{cat}(i) = \{ j \in \mathbb N : P_i \textnormal{ catalyzes } V_j \}.
\end{equation}
We also define a vector $b$ which contains the molecular weights of the macromolecules, such that the scalar product $b\T P$ is equal to the cells' dry weight.

We included the following biophysical constraints in the optimization problem:
\begin{itemize}
\item Enzyme capacity constraints:
  Generally, the reaction fluxes catalyzed by an enzyme $P_i$ are limited by upper and lower bounds of the form
  \begin{equation}
    \label{eq:enzyme-constraint0}
    \sum_{j\in\cat(i)} \left|\frac{V_{j}}{k_{\cat,\pm j}}\right| \leq P_i,
  \end{equation}
  where each $k_{\cat,+j}$ ($k_{\cat,-j}$) is the forward (backward) $k_{\cat}$ value for the reaction $V_j$.
  For irreversible fluxes, i.e., non-negatively constrained $V_{j}$, the constraint~\eqref{eq:enzyme-constraint0} can be written as
  \begin{equation}
    \label{eq:enzyme-constraint1}
    h_{c,i}\T V \leq e_i\T P,
  \end{equation}
  where the $j$-th component of the vector $h_{c,i}$ is 
  \begin{equation*}
    (h_{c,i})_j = \left\{
      \begin{aligned}
        & k_{\cat,+j}^{-1} \textnormal{ if } j \in \cat(i) \\
        & 0 \textnormal{ otherwise},
      \end{aligned}
      \right.
  \end{equation*}
  and $e_i$ is a vector with a 1 in its $i$-th entry and zero elsewhere.
  If some of the fluxes catalyzed by the enzyme $P_i$ are reversible, we need to account for positive and negative signs as well as forward and backward $k_{\cat}$ values independently, and the constraint~\eqref{eq:enzyme-constraint0} needs to be written as
  \begin{equation}
    \label{eq:enzyme-constraint2}
    H_{c,i} V \leq E_i P,
  \end{equation}
  where the matrix $H_{c,i}$ is composed of rows such as the vector $h_{c,i}$ in the irreversible case~\eqref{eq:enzyme-constraint1}, but covering all combinations of positive and negative signs for the components corresponding to the reversible fluxes,
  and all rows of the matrix $E_i$ are equal to $e_i\T$.
  As an example, consider a case where enzyme $P_1$ catalyzes $V_1$ reversibly and $V_2$ irreversibly, such that~\eqref{eq:enzyme-constraint0} can be specified as
  \begin{equation}
    \label{eq:enzyme-constraint-example}
    \left| \frac{V_1}{k_{\cat,\pm 1}} \right| + \frac{V_2}{k_{\cat,+2}} \leq P_1.
  \end{equation}
  Then $H_{c,1}$ matrix is then
  \begin{equation}
    \label{eq:enzyme-constraint3}
    H_{c,1} = \begin{pmatrix}
      -k_{\cat,-1}^{-1} & k_{\cat,+2}^{-1} & 0 & \dots & 0\\
      k_{\cat,+1}^{-1} & k_{\cat,+2}^{-1} & 0 & \dots & 0
    \end{pmatrix},
  \end{equation}
  so that both possible signs of $V_1$ in the constraint~\eqref{eq:enzyme-constraint-example} are accounted for,
  and $E_1 = \begin{pmatrix}
    1 & 0 & \dots & 0 \\
    1 & 0 & \dots & 0
  \end{pmatrix}$.

  At the network level, this translates to the constraints
  \begin{equation}
    \label{eq:enzyme-constraint}
    H_c\, V\, \leq \,H_E\, P,
  \end{equation}
  where the matrices $H_c$ and $H_E$ are defined as vertical concatenations of the matrices $H_{c,1}$ to $H_{c,m}$ and $E_1$ to $E_m$, respectively, where $m$ is the number of enzymes in $P$.
\item Biomass-independent flux bounds, for example positivity of irreversible fluxes:
  \begin{equation}
    \label{eq:flux-bounds}
    V_{\min} \,\leq \,V \,\leq\, V_{\max}\,.
  \end{equation}
  % Components of $V_{\min}$ and $V_{\max}$ where no bound is set were chosen as $\pm \infty$.\todoD{I think we can remove this sentence.}
\item Positivity of molecular species:
  \begin{equation}
    \label{eq:positive-concentrations}
    P\,\geq\, 0, \qquad Y\, \geq\, 0.
  \end{equation}
\item Biomass composition constraints, related to the cell's solvent capacity:
  In addition to enzymes, the macromolecular species $P$ will in general also include molecules which form the structural support of the cells, for example membrane molecules.
  We assumed that this structural components need to be present in a minimal fraction of the total dry weight $b\T P$ of the cells.
  If $P_s$ is a structural biomass component, such a constraint can be written as
  \begin{equation}
    \label{eq:bm-constraint0}
    \psi_s b\T P \leq P_s,
  \end{equation}
  where $\psi_s$ is the minimal fraction of the total dry weight for $P_s$.
  If there are several such constraints, resulting for example from different structural components, they can be written in matrix form as
  \begin{equation}
    \label{eq:bm-constraint}
    H_B\, P \,\leq\, 0,
  \end{equation}
  where each row of $H_B$ is of the form $\varphi_s b\T - e_s\T$, for different indices $s$.
  With this constraint, the amount of structural cell components will impose an upper bound on the feasible enzyme amount.
  In conjunction with the enzyme capacity constraint~\eqref{eq:enzyme-constraint0}, this also sets an upper constraint on the sum of fluxes as in the molecular crowding extension to FBA \cite{BegVaz2007,VazquezBeg2008}.
\end{itemize}
For the dynamic optimization problem, these constraints are \emph{path constraints}, i.e., they must be satisfied at every time point within the optimization horizon.
 
We combined the reduced model \eqref{eq:reduced-model-fluxes}--\eqref{eq:quasi-steady-state-flux-equation} for a metabolic-genetic network with the biophysical constraints into a dynamic optimization problem as follows.
For notational convenience, we introduce the variable $Z = (Y, P)$.
In addition to the biophysical constraints listed above, some objective functionals required to define a \emph{terminal constraint} in the form of a target set $\mathcal Z$, which was imposed on the network's state only at the terminal time.
We then define the set of all dynamic fluxes $V$ that take the network to the target set within an arbitrary non-negative time $t_f$ as
\begin{equation}
  \label{eq:admissible-fluxes}
  \mathcal V(\mathcal Z,Z_0) = \bigcup_{t_f \geq 0} \{ V \in \mathcal{M}[0,t_f] \mid Z(t_f,V,Z_0) \in \mathcal Z \},
\end{equation}
where $\mathcal M[0,t_f]$ is the set of measurable functions of appropriate dimension over the interval $[0,t_f]$, and $Z(t_f,V,Z_0)$ is the solution of the differential equation~\eqref{eq:reduced-model-fluxes} with flux variables $V(t)$ and initial condition $Z(0) = Z_0$ \cite{MackiStr1982}.

The objective functional $J$ in the optimization is given in general form as an integral over the dynamic variables plus a term for the terminal state:
\begin{equation}
  \label{eq:objective}
  J = \int\limits_{0}^{t_f} \Phi(Y(t), P(t),V(t)) \;dt + \Psi(Z(t_f)).
\end{equation}

The dynamic optimization problem is then constructed as
\begin{equation}
  \label{eq:metabolic-opt-problem}
  \begin{aligned}
    \max_{\mathcal V(\mathcal Z, Z_0)} &\ \int\limits_{0}^{t_f} \Phi(Y(t), P(t),V(t)) \;dt + \Psi(Z(t_f)) \\
    \textnormal{s.t. } & \dot{Y} = - S^y_y\, V_y \\
    & \dot{P} = \varepsilon\, S^p_p\, V_p \\
    & Z(0) = Z_0 \\
    \eqref{eq:quasi-steady-state-flux-equation}\quad& S^x_y\, V_y(t) + S^x_x\, V_x(t) - \alpha\, \varepsilon\, S^x_p\, V_p(t) = 0 \\
    \eqref{eq:enzyme-constraint}\quad& H_c\, V(t) \leq H_E\, P(t) \\
    \eqref{eq:flux-bounds}\quad& v_{\min} \leq V(t) \leq v_{\max} \\
    \eqref{eq:positive-concentrations}\quad& Z(t) \geq 0 \\
    \eqref{eq:bm-constraint}\quad& H_B P(t) \leq 0,
  \end{aligned}
\end{equation}
where equation numbers refer to the derivation of the constraints above.

A variety of approaches are available to numerically solve such an optimization problem \cite{VilasBal2012}.
For this study, we applied collocation methods based on a time discretization of the dynamic variables $Y(t)$, $P(t)$, and $V(t)$ \cite{Stryk1993,RazzaghiNaz1998,Biegler2007}.
The discretized problem was then solved with the Python optimization package cvxopt ({http://abel.ee.ucla.edu/cvxopt/}).
Details are given in~\ref{sec:numerical-solution-collocation}.

\section{Analysis of a minimal metabolic-genetic network and the Monod growth kinetics}
\label{sec:analys-minim-metab}

As a first case study to demonstrate the deFBA approach, we considered the minimal nutrient uptake network introduced in~\eqref{eq:minimal-network}, with the approximated dynamics given by~\eqref{eq:reduced-minimal-model-fluxes}--\eqref{eq:minimal-qss-equation}.
We used the deFBA method to compare the optimal solutions for three different objective functionals.
Biomass maximization is a common objective used in classical flux balance analysis \cite{OrthThi2010}.
We incorporated it into our algorithm in two alternative ways: first, as biomass maximization at the end of the optimization horizon, and second, as discounted maximization of the biomass integrated over time.
The maximization of terminal biomass has also been used in dFBA \cite{MahadevanEdw2002}, whereas the discounted biomass objective has been used as an evolutionary fitness measure in a recent analysis of microbial metabolism \cite{Frank2010}.
As third objective, we considered the minimization of the time required to metabolize the available nutrient completely.
The minimization of the substrate consumption time has also been used previously to predict enzyme concentrations in pathway activation \cite{Klipp2002,Oyarzun2009}.

Formally, the three objective functionals were defined as follows.
\begin{itemize}
\item Maximization of biomass at the end of the considered time interval:
  \begin{equation}
    \label{eq:monod-objective-1}
    J_1 = P(t_f)
  \end{equation}
\item Discounted maximization of the biomass integral:
  \begin{equation}
    \label{eq:monod-objective-2}
    J_2 = \int\limits_{0}^{t_f} P(\tau) \,e^{-\varphi\tau} \;d\tau
  \end{equation}
  with a discount parameter $\varphi \geq 0$.
  A positive discount parameter can in general be used to reduce the effect of the terminal time $t_f$ on the objective function, 
since the objective function value is uniformely bounded for varying terminal times, provided that the discount parameter is larger than the maximal growth rate.
\item Minimization of the time required to metabolize the nutrient completely:
  \begin{equation}
    \label{eq:monod-objective-3}
    J_3 = - \int\limits_{0}^{t_f} d\tau = - t_f
  \end{equation}
  with the terminal constraint
  \begin{equation}
    \label{eq:monod-objective-3-terminal}
    Y(t_f) = 0.
  \end{equation}
\end{itemize}

The biophysical inequality constraints for this example were given by positivity of the molecular species and irreversibility of the two fluxes:
\begin{equation}
  \label{eq:mmg-positivity-constraints}
  \begin{aligned}
    0 &\leq Y &\qquad
    0 &\leq P \\
    0 &\leq V_y &\qquad
    0 &\leq V_p \\
  \end{aligned}
\end{equation}
We also constrained the enzymatic capacity according to \eqref{eq:enzyme-constraint0}, where $P$ is considered as an enzyme catalyzing both the uptake reaction $V_y$ and the biomass reaction $V_p$:
\begin{equation}
  \label{eq:mmg-enzymatic-constraint}
    \frac{V_y}{k_{\cat,y}} + \frac{\varepsilon V_p}{k_{\cat,p}}\, \leq \,P\,,
\end{equation}
where $k_{\cat,y}$ and $k_{\cat,p}$ are the catalytic constants for $V_y$ and $V_p$, respectively.

We performed an analytical study of the optimization problem for the objective functionals $J_2$ and $J_3$, which is described in~\ref{sec:analytical-minimal-mg-net}.
The analysis showed that there is a unique optimal solution which is composed of an initial exponential growth phase until the substrate is completely metabolized, followed by a stationary phase of nil growth. 
This is in agreement with the typical growth kinetics of bacterial cultures \cite{Monod1949}. 
We also found that the time $t_s$ at which the culture switches from exponential to stationary growth is:
\begin{equation}
  \label{eq:monod-switching-time}
  t_s = \left(\frac{\alpha}{k_{\cat,y}} + \frac{1}{k_{\cat,p}}\right)\, \log\left(1 + \frac{Y_0}{\alpha P_0}\right),
\end{equation}
where $Y_0$ and $P_0$ are the initial concentrations of metabolites and biomass, respectively.
The analytical study showed that the minimal network will reach the maximal biomass when the nutrient is depleted.
Regarding the objective $J_1$, this means that any solution where the nutrient is depleted at the terminal time $t_f$ is an optimal solution for the objective $J_1$.
The solution for objective $J_2$, for example, is also optimal for objective $J_1$, suggesting that the optimum for $J_1$ may not be unique.
We confirmed the non uniqueness numerically as described next.

We computed numerical solutions for all three objective functionals, using a discretization approach as described in \ref{sec:numerical-solution-collocation}.
The optimization results are shown in Figure~\ref{fig:result-min-metabogen}.
\begin{figure}
  \centering
 \begin{tabular}{ccc}
   $J_1$ & $J_2$ & $J_3$ \\
   \includegraphics[width=0.32\textwidth]{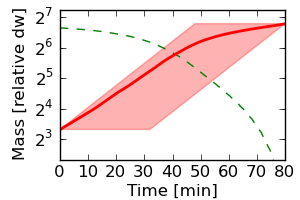} &
   \includegraphics[width=0.32\textwidth]{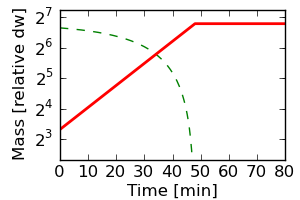} &
   \includegraphics[width=0.32\textwidth]{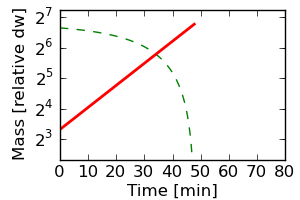} \\
   \includegraphics[width=0.32\textwidth]{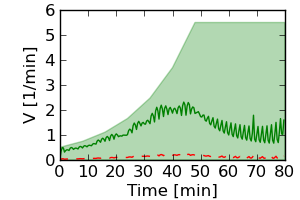} &
   \includegraphics[width=0.32\textwidth]{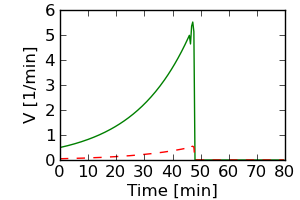} &
   \includegraphics[width=0.32\textwidth]{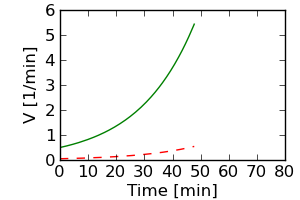} \\
 \end{tabular}
  \caption{Dynamic optimization results for the minimal metabolic-genetic network.
Top row: Optimal time courses of substrate (dashed green line) and biomass (red line) for objective functionals $J_1$, $J_2$, and $J_3$.
Bottom row: Optimal time courses of fluxes ($V_y$ green, $\varepsilon V_p$ dashed red).
Filled regions show the point-wise variability of the optimal solution.
The following parameter values were used: $\alpha=10$, $\varepsilon=0.1$, $k_{\cat,y}=1\ \mathrm{min}^{-1}$, $k_{\cat,p}=1\ \mathrm{min}^{-1}$, $\varphi=0.01\ \mathrm{min}^{-1}$, $t_f=80\ \mathrm{min}$ (for objectives $J_1$ and $J_2$). The initial condition was taken as $Y(0) = 100$ and $P(0) = 1$.
% The optimized trajectory for the objective functional $J_3$ terminates at $t_{f,3} = (48.0 \pm 0.1)\ \mathrm{min}$, which under the given constraints is the minimal time required to completely metabolize the substrate.
% Evaluating the analytical formula~\eqref{eq:monod-switching-time} for these parameters yields a switching time $t_s = 47.96\ \mathrm{min}$, which is within the tolerance of the numerical solution.
}
  \label{fig:result-min-metabogen}
\end{figure}
In case of objectives $J_2$ and $J_3$, the numerical solution is in agreement with the analytically predicted biphasic growth kinetics.

We also computed the variability of the optimal solutions, using a similar concept as in flux variability analysis \cite{BurgardVai2001}:
We solved an optimization problem as given in~\eqref{eq:metabolic-opt-problem}, but with the original objective functional as an additional constraint, and the minimization or maximization of a flux or state variable at one time point as new objective.
By computing the variability of the optimal solutions point-wise in time, a significant variability of the optimal solutions is observed for the objective $J_1$, whereas the uniqueness of the solutions for objectives $J_2$ and $J_3$ manifests itselfs in the fact that there is no point-wise variability in the numerical solutions (bottom row of Figure~\ref{fig:result-min-metabogen}).

Since the proposed optimization method does not assume any specific reaction kinetics, it yields optimal solutions that, in principle, may not be realizable with a plausible biochemical mechanism. 
To check whether the optimal biphasic growth profile is consistent with a realistic kinetic law, we compared it to the simulated growth curves obtained with a typical kinetic model.
We extended the minimal network to a kinetic model with reaction rates given by the Michaelis-Menten law:
\begin{equation}
  \label{eq:monod-rate-laws}
  \begin{aligned}
    V_y &= \frac{k_{y} \,P\; Y}{K_y\, \vartheta_e + Y}\\
    \varepsilon V_p &= \frac{k_{p}\, P \,X}{K_p + X}. 
  \end{aligned}
\end{equation}
For a faithful comparison with the dynamic optimization results, the parameters $k_{y}$ and $k_{p}$ need to satisfy the enzymatic capacity constraint~\eqref{eq:mmg-enzymatic-constraint}.
We aimed for parameter values ensuring that the enzyme $P$ operates at full capacity, corresponding to the constraint~\eqref{eq:mmg-enzymatic-constraint} being satisfied with equality.
Together with the quasi steady state constraint~\eqref{eq:minimal-qss-equation}, we obtained parameter values given by
\begin{equation}
  \label{eq:mm-parameter-bounds}
  \begin{aligned}
    k_{y} &= \left( \frac{1}{k_{\cat,y}} + \frac{1}{\alpha k_{\cat,p}} \right)^{-1} \\
    k_{p} &= \left( \frac{\alpha}{k_{\cat,y}} + \frac{1}{k_{\cat,p}} \right)^{-1}.
  \end{aligned}
\end{equation}
Simulations of the kinetic model (shown in Figure~\ref{fig:monod-simulation}) were practically identical to the optimal solutions for objectives $J_{2}$ and $J_{3}$, both of which were computed without presupposing any specific reaction kinetics.
%Having observed the equivalency between the time courses obtained from the dynamic optimization approach and the ones obtained from the simulation of the kinetic model, 
\begin{figure}[tbp]
  \centering
 \includegraphics[width=5.5cm]{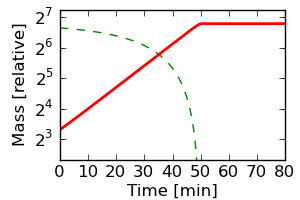}
  \caption{Simulation of the minimal metabolic-genetic network with Michaelis-Menten reaction rates. 
Parameter values were set as $k_{y}=0.5\ \mathrm{min}^{-1}$, $k_{p}=0.05\ \mathrm{min}^{-1}$, $K_y \vartheta_e = K_p = 1$ and initial condition $Y(0) = 100$, $X(0) = 0$, $P(0) = 1$.
}
  \label{fig:monod-simulation}
\end{figure}
These simulation results are similar to the classical Monod model of bacterial growth \cite{Monod1949}.
In \ref{sec:proof-equiv-betw}, we rigorously prove the equivalence of the minimal nutrient uptake network \eqref{eq:minimal-network} and the Monod model. 

\section{Dynamic optimization of a core carbon network}
\label{sec:dynam-optim-core}
\subsection{Network description}
\label{sec:network-description}

We constructed a metabolic-genetic network model (see Figure~\ref{fig:core-network}) as an abstraction of core processes relating carbon uptake and growth. 
It accounts for the uptake of different extracellular species as nutrients, including two carbon sources ${\bio{Carb1}}$ and ${\bio{Carb2}}$, oxygen, fermentation products, and other organic molecules.
The model includes the major anabolic and catabolic processes together with the translational mechanisms for ribosome and enzyme assembly.

The stoichiometry of the exchange and metabolic reactions were taken from the original model proposed in \cite{CovertSch2001}. 
However, in contrast to the work in \cite{CovertSch2001}, we do not include regulatory interactions but instead added catalytic enzymes, structural macromolecules, and ribosomes as biomass components with appropriate biomass production reactions.
We used deFBA to predict the adaptation dynamics by optimizing the reaction rates without the need to include regulatory interactions.

\begin{figure}[tbp]
  \centering
  \includegraphics[width=1\linewidth]{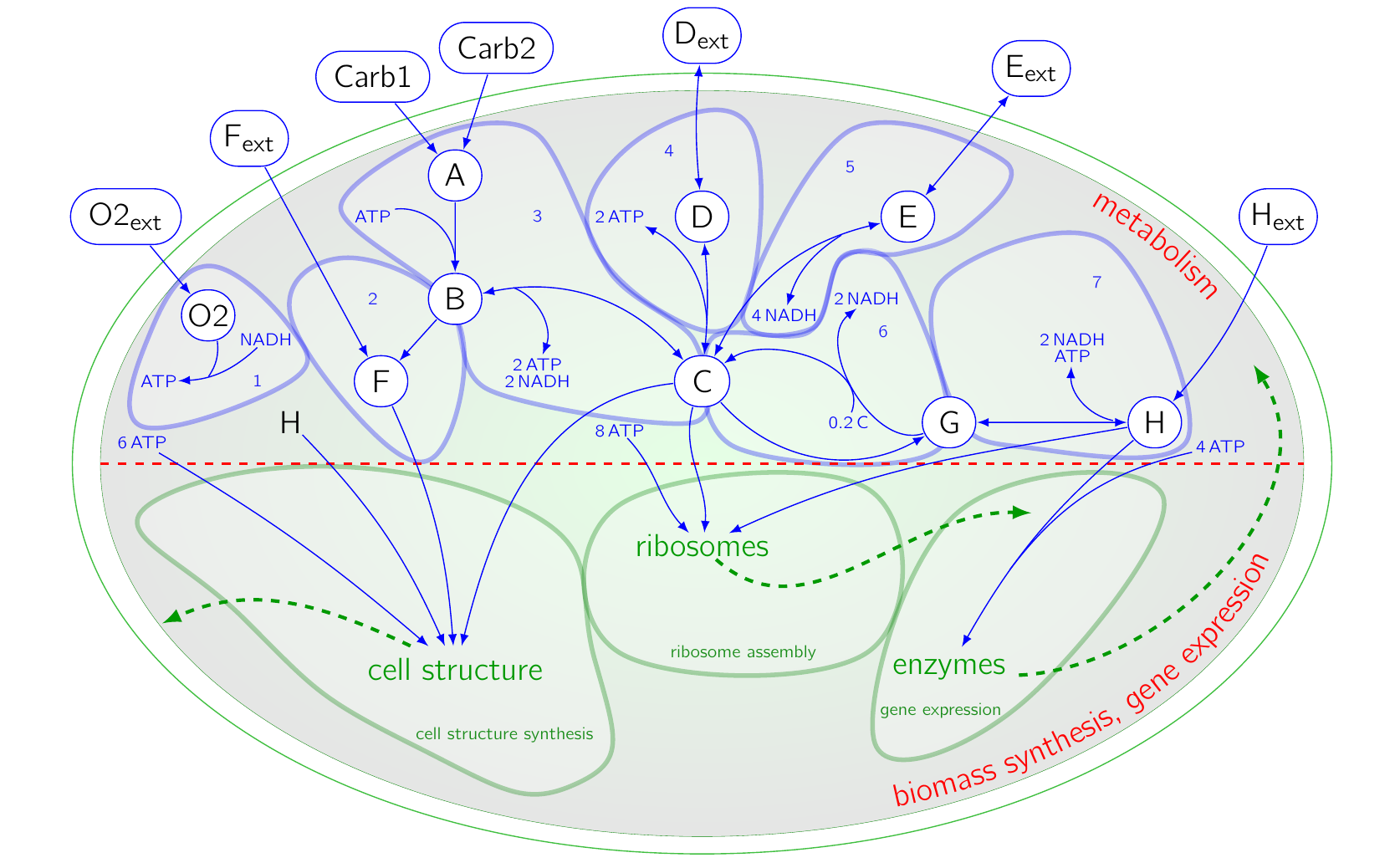}
  \caption{Schematic diagram of the abstract core metabolic-genetic network.
Upper part of the diagram shows metabolic processes, the lower part biomass / gene expression processes.
Abstract representations of core metabolic processes are labelled as follows: (1) Electron transport chain, (2) Lipid metabolism, (3) Glycolysis, (4) Acetate fermentation, (5) Ethanol fermentation, (6) Citric acid cycle, (7) Amino acid metabolism.
Extracellular species correspond to different nutrients and waste products as detailed in the main text.
Intracellular species ${\bio{A}}$--${\bio{H}}$ correspond to metabolite pools,
and ${\bio{O2}}$ is intracellular oxygen.
}
  \label{fig:core-network}
\end{figure}

The model is detailed in Table~\ref{tab:core-network-metabolic}--\ref{tab:core-network-biomass},
%\todoD{maybe we can number the reactions in the figure and table to make it clearer to read} 
together with the utilized enzyme catalytic constants.
To get reasonable flux bounds on reactions describing diffusive exchange across the plasma membrane, we defined the structural component ${\bio{S}}$ as the enzymatic macromolecule for these reactions, together with an appropriate rate constant for diffusion over the plasma membrane or through pore proteins.
In Table~\ref{tab:core-network-metabolic}, enzymes are denoted as ${\bio{T_i}}$ for transport enzymes and ${\bio{E_j}}$ for catalytic enzymes. The macromolecules for biomass reactions and diffusive transport are the ribosome ${\bio{R}}$ and the structural component ${\bio{S}}$, respectively. 
The catalytic constants of the enzymes are based on typical $k_{\cat}$ values in metabolism \cite{Bar-EvenNoo2011,SchomburgCha2002,MiloJor2010}.
While the original model \cite{CovertSch2001} did not specify which reactions were modelled as reversible,
our model includes reversible amino acid biosynthesis, and reversible synthesis and secretion of waste products $\bio{D}$ and $\bio{E}$.

The catalytic constants of the biomass reactions, corresponding to protein translation, are based on the measured translation rate of 17 amino acids/s in \textit{E.\ coli} \cite{YoungBre1976}.
The catalytic constant in the model for each protein translation reaction is the ratio of this translation rate and the number of amino acid residues in the specific protein.
The ${\bio{ATP}}$ requirements for the gene expression reactions are deduced from a flux-based analysis of gene expression \cite{AllenPal2003}.
The stoichiometry of amino acids in the biomass reactions is close to the observed average protein length of 300 amino acids \cite{BrocchieriKar2005}, but with a higher stoichiometry for reactions which are meant to be a condensed description of a complex pathway.

Despite its relative simplicity, the model includes a range of metabolic processes typically found in microbes, such as uptake of alternative carbon sources, aerobic and anaerobic glycolysis, and the uptake and synthesis of lipids and amino acids.
The network includes fermentation products ${\bio{D}}$ and ${\bio{E}}$, which can also be reutilized as nutrients when oxygen is available.
The amino acid ${\bio{H}}$ can be used directly for the synthesis of proteins, or be diverted to the carbon metabolism.
The network is parametrized such that ${\bio{Carb1}}$ should be expected to be the preferred carbon source as compared to ${\bio{Carb2}}$, since the corresponding uptake reaction ${\bio{Carb1}} \rightarrow {\bio{A}}$ is modeled with a higher enzymatic efficiency and a lower enzyme production cost than ${\bio{Carb2}} \rightarrow {\bio{A}}$.
The biosynthetic pathways ${\bio{B}} \rightarrow {\bio{F}}$ and ${\bio{G}} \rightarrow {\bio{H}}$ are modeled to require a substantially larger investment in terms of enzyme production cost compared to the uptake pathways ${\bio{F_{\ext}}} \rightarrow {\bio{F}}$ and ${\bio{H_{ext}}} \rightarrow {\bio{H}}$.
By the ${\bio{C}}$-${\bio{G}}$ cycle illustrated in Figure~\ref{fig:core-network}, the carbon sources can be completely catabolized under aerobic conditions.
Otherwise, they can be catabolized to the fermentation products ${\bio{D}}$ and ${\bio{E}}$.

\begin{table}[tbp]
  \centering
  \caption{Exchange and metabolic reactions, together with rate constants for the core metabolic-genetic network.
  Reversible reactions use equal forward and backward $k_{cat}$ values.}
  \renewcommand{\arraystretch}{1.01}\small
  \begin{tabular}{|l|l|r|}
    \hline
    Reaction & Enzyme & $k_\mathrm{cat} \hspace{0.2em} (\mathrm{min}^{-1})$ \\ \hline \hline
    \multicolumn{3}{|l|}{Exchange reactions} \\ \hline
    ${\bio{Carb1}} \rightarrow {\bio{A}}$ & ${\bio{T_{C1}}}$ & $3000$ \\
    ${\bio{Carb2}} \rightarrow {\bio{A}}$ & ${\bio{T_{C2}}}$ & $2000$ \\
    ${\bio{F_{ext}}} \rightarrow {\bio{F}}$ & ${\bio{T_{F}}}$ & $3000$ \\
    ${\bio{O2_{ext}}} \rightarrow {\bio{O2}}$ & ${\bio{S}}$ & $1000$ \\
    ${\bio{D}} \leftrightarrow {\bio{D_{ext}}}$ & ${\bio{S}}$ & $1000$ \\
    ${\bio{E}} \leftrightarrow {\bio{E_{ext}}}$ & ${\bio{S}}$ & $1000$ \\
    ${\bio{H_{ext}}} \rightarrow {\bio{A}}$ & ${\bio{T_{H}}}$ & $3000$ \\ 
    \hline\hline
    \multicolumn{3}{|l|}{Metabolic reactions} \\ \hline
    ${\bio{A}} + {\bio{ATP}} \rightarrow {\bio{B}}$ & ${\bio{E_{B}}}$ & $1800$ \\
    ${\bio{B}} \rightarrow {\bio{C}} + 2\,{\bio{ATP}} + 2\,{\bio{NADH}}$ & ${\bio{E_{C}}}$ & $1800$ \\
    ${\bio{B}} \rightarrow {\bio{F}}$ & ${\bio{E_{F}}}$ & $1800$ \\
    ${\bio{C}} \rightarrow {\bio{G}}$ & ${\bio{E_{G}}}$ & $1800$ \\
    ${\bio{G}} \rightarrow 0.8 \,{\bio{C}} + 2\, {\bio{NADH}}$ & ${\bio{E_{N}}}$ & $1800$ \\
    ${\bio{C}} \leftrightarrow 2\,{\bio{ATP}} + 3\, {\bio{D}}$ & ${\bio{E_{D}}}$ & $1800$ \\
    ${\bio{C}} + 4\, {\bio{NADH}} \leftrightarrow 3\,{\bio{E}}$ & ${\bio{E_{E}}}$ & $1800$ \\
    ${\bio{G}} + {\bio{ATP}} + 2\, {\bio{NADH}} \leftrightarrow {\bio{H}}$ & ${\bio{E_{H}}}$ & $1800$ \\
    ${\bio{NADH}} + {\bio{O}} \rightarrow {\bio{ATP}}$ & ${\bio{E_{T}}}$ & $1800$ \\
    \hline
  \end{tabular}
  \label{tab:core-network-metabolic}
\end{table}
\begin{table}[tbp]
  \centering
  \caption{Biomass reactions, weight vector $b$ and initial conditions for biomass product, together with the rate constants for the core metabolic-genetic network.
    All biomass reactions are catalyzed by the ribosome \bio{R}.}
% Initial condition:
% Optimal state from RBA: {'ETc2': 4.3707126779407129e-22, 'ER4': 2.97432728485774e-05, 'ER6': 3.6070270371501862e-06, 'ETc1': 1.6985736359449047e-05, 'ER1': 2.8309560599081743e-05, 'ER2': 2.5777731895378041e-05, 'ER3': 2.5318287037037113e-06, 'ERres': 5.209181155447402e-05, 'ER8': 1.4737549483359209e-05, 'ER7': -3.5528064062416003e-23, 'R': 2.9186795491143556e-05, 'S': 0.00023333333333333339, 'ER5': 1.5005723365218202e-05, 'ETf': -8.9914556976570759e-22, 'ETh': 0.0}
  \renewcommand{\arraystretch}{1.01}\small
  \begin{tabular}{|l|r|r|r|}
    \hline
    Reaction & $b$ & $P(0) \, / \, (\mu g l^{-1})$ & $k_\mathrm{cat} \hspace{0.2em} (\mathrm{min}^{-1})$ \\ \hline \hline
    $400\, {\bio{H}} + 1600\, {\bio{ATP}} \rightarrow {\bio{T_{C1}}}$ & $4$ & $17.0$ & $2.5$ \\
    $1500\,  {\bio{H}} + 6000\, {\bio{ATP}} \rightarrow {\bio{T_{C2}}}$ & $15$ & $0.0$ & $0.67$ \\
    $400 \, {\bio{H}} + 1600\, {\bio{ATP}} \rightarrow {\bio{T_{F}}}$ & $4$ & $0.0$ & $2.5$ \\
    $400\, {\bio{H}} + 1600\, {\bio{ATP}} \rightarrow {\bio{T_{H}}}$ & $4$ & $0.0$ & $2.5$ \\
    $500\, {\bio{H}} + 2000\, {\bio{ATP}} \rightarrow {\bio{E_{B}}}$ & $5$ & $28.3$ & $2$ \\
    $500\,  {\bio{H}} + 2000\, {\bio{ATP}} \rightarrow {\bio{E_{C}}}$ & $5$ & $25.8$ & $2$ \\
    $1000\, {\bio{H}} + 4000\, {\bio{ATP}} \rightarrow {\bio{E_{D}}}$ & $10$ & $3.6$ & $1$ \\
    $1000\,  {\bio{H}} + 4000\, {\bio{ATP}} \rightarrow {\bio{E_{E}}}$ & $10$ & $0.0$ & $1$ \\
    $1500\,  {\bio{H}} + 6000\, {\bio{ATP}} \rightarrow {\bio{E_{F}}}$ & $15$ & $2.5$ & $0.67$ \\
    $500\,  {\bio{H}} + 2000\, {\bio{ATP}} \rightarrow {\bio{E_{G}}}$ & $5$ & $29.7$ & $2$ \\
    $2500\,  {\bio{H}} + 10000\, {\bio{ATP}} \rightarrow {\bio{E_{H}}}$ & $25$ & $14.7$ & $0.4$ \\
    $500\,  {\bio{H}} + 2000\, {\bio{ATP}} \rightarrow {\bio{E_{N}}}$ & $5$ & $15.0$ & $2$ \\
    $500\,  {\bio{H}} + 2000\, {\bio{ATP}} \rightarrow {\bio{E_{T}}}$ & $5$ & $52.1$ & $2$ \\
    $2000\,  {\bio{H}} + 4000\, {\bio{C}} + 16000\, {\bio{ATP}} \rightarrow {\bio{R}}$ & $60$ & $29.2$ & $0.2$ \\
    $250\,  {\bio{H}} + 250\, {\bio{C}} + 250\, {\bio{F}} + 1500\, {\bio{ATP}} \rightarrow {\bio{S}}\qquad$ & $7.5$ & $233$ & $3$ \\
        \hline
  \end{tabular}
  \label{tab:core-network-biomass}
\end{table}
\begin{table}[tbp]
  \centering
  \caption{Initial nutrient conditions ($\mathrm{mM}$), oxygen inflow $V_O$ ($\mathrm{mM\,s}^{-1}$), oxygen turnover $\gamma_O$ ($\mathrm{s}^{-1}$), initial biomass $b\T P(0)$ ($\mathrm{g}\,\mathrm{l}^{-1}$), and discount factor $\varphi$ ($\mathrm{min}^{-1}$) for scenarios 1--3}
    \renewcommand{\arraystretch}{1.2}
  \begin{tabular}{|l|ccccccccccc|}
  \hline
    Scenario\qquad & ${\bio{Carb1}}$ & ${\bio{Carb2}}$ & ${\bio{O2_{ext}}}$ & ${\bio{D_{ext}}}$ & ${\bio{E_{ext}}}$ & ${\bio{F_{ext}}}$ & ${\bio{H_{ext}}}$ & $V_O$ & $\gamma_O$ & $b\T P(0)$ & $\varphi$ \\ \hline
    1 & $2$ & $30$ & $50$ & 0 & 0 & 0 & 0 & $20$ & $0.4$ & $0.005$ & $0.1$ \\
    2 & $50$ & 0 & $5$ & 0 & 0 & 0 & 0 & $2$ & $0.4$ & $0.005$ & $0.1$ \\
    3 & $50$ & 0 & $50$ & 0 & 0 & $5$ & $5$ & $20$ & $0.4$ & $0.005$ & $0.3$ \\
    \hline
  \end{tabular}
  \label{tab:scenarios-initial}
\end{table}

\subsection{Dynamic optimization of the core carbon network}
\label{sec:dynam-optim-core-methods}

We used our deFBA algorithm to predict the substrate concentrations, biomass composition, and growth kinetics under three metabolic adaptation scenarios: carbon switch, oxygen limitation and rich medium growth.
Since the analysis of the minimal network suggested that maximizing the discounted biomass integral is a biologically reasonable objective in a dynamic setting, we used this objective functional for the analysis of the core carbon network.
We modeled each scenario by manipulating the initial conditions of the model as detailed in Table \ref{tab:scenarios-initial}.

We computed the initial biomass composition by resource balance analysis (RBA) \cite{GoelzerFro2011} so as to yield maximal aerobic growth rate on ${\bio{Carb1}}$ alone.
The initial biomass composition thus corresponded to a cellular state pre-adapted to such an environment.
RBA determines the biomass distribution for the maximal growth rate in steady state exponential growth,
thus avoiding spurious transients caused by a non-optimal initial composition of biomass.

We assumed the culture to be run as an aerated batch process.
The extracellular oxygen dynamics were modeled by the differential equation
\begin{equation}
  \label{eq:oxygen-turnover}
  \frac{d}{dt}\, {\bio{O2_{ext}}} = V_O - \gamma_O\, {\bio{O2_{ext}}},
\end{equation}
where $V_O$ is the oxygen inflow and $\gamma_O$ the ventilation rate, with different values per Scenario as given in Table~\ref{tab:scenarios-initial}.
Oxygen turnover described by~\eqref{eq:oxygen-turnover} is added to the full biological model for the dynamic optimization.

The objective functional is the discounted biomass integral
\begin{equation}
  \label{eq:core-network-objective}
  J = \int\limits_{0}^{t_f} b\T \,P(t)\, e^{-\varphi t} \;dt,
\end{equation}
where $b$ is the weight vector for the individual biomass components as given in Table~\ref{tab:core-network-biomass}, $P$ contains the reaction products of the biomass reactions listed in Table~\ref{tab:core-network-biomass}, and $\varphi$ is the discount factor as given in Table~\ref{tab:scenarios-initial} for the three scenarios.

In addition to the enzyme capacity constraint, we used a biomass composition constraint to ensure that the structural component $S$ makes up for at least 35 \% of total biomass:
\begin{equation}
  \label{eq:structure-constraint}
  0.35 b\T P \leq \bio{S}.
\end{equation}

\subsection{Scenario 1: carbon switch}
\label{sec:carb-switch-scen}

In this scenario, we studied cellular growth under both carbon sources, with a low concentration for a preferred carbon source ${\bio{Carb1}}$, and a high concentration of the other carbon source ${\bio{Carb2}}$.
The optimization predicted four distinct growth phases, labelled a--d in Figure~\ref{fig:core-network-results-carbon}. 
In phase a, cells grew exclusively on ${\bio{Carb1}}$.
After its depletion, they switched to ${\bio{Carb2}}$ uptake (phase b).
%As discussed above, $Carb1$ should be interpreted as preferred carbon source compared to $Carb2$, and the dynamic optimization %with the discounted biomass integral as objective 
The optimization predicted a nutrient uptake pattern comparable to catabolite repression \cite{Bettenbrock2006}, where a preferred carbon source is completely consumed before cells switch to the non-preferred carbon source.
In the growth phases a and b, cells produced the waste metabolite ${\bio{D}}$.
When both carbon sources were consumed, the optimization predicted re-consumption of the previously excreted waste metabolite ${\bio{D}}$ (phase c), thus being able to sustain growth, though at a significantly lower rate.
The stationary phase d was reached after complete consumption of the substrate.

The predictions seem to indicate a significant intracellular reorganization well before the complete depletion of the second carbon source (see Figure~\ref{fig:core-network-results-carbon}\,B--C).
This means that the optimal response suggests a cellular adaptation to the impending nutrient depletion, reminiscent of changes in gene regulation observed in \textit{S. cerevisiae} just before a glucose--gluconeogenic switch \cite{ZamparKuem2013}.
It is known that some microorganisms, for example \textit{Bacillus subtilis}, monitor the ratio of population size to nutrient availability by quorum sensing \cite{BischofsHug2009}, which could allow them to predict an upcoming nutrient depletion and appropriately adjust their gene expression levels before starvation.

\begin{figure}[tbp]
  \centering
  \includegraphics[width=\linewidth]{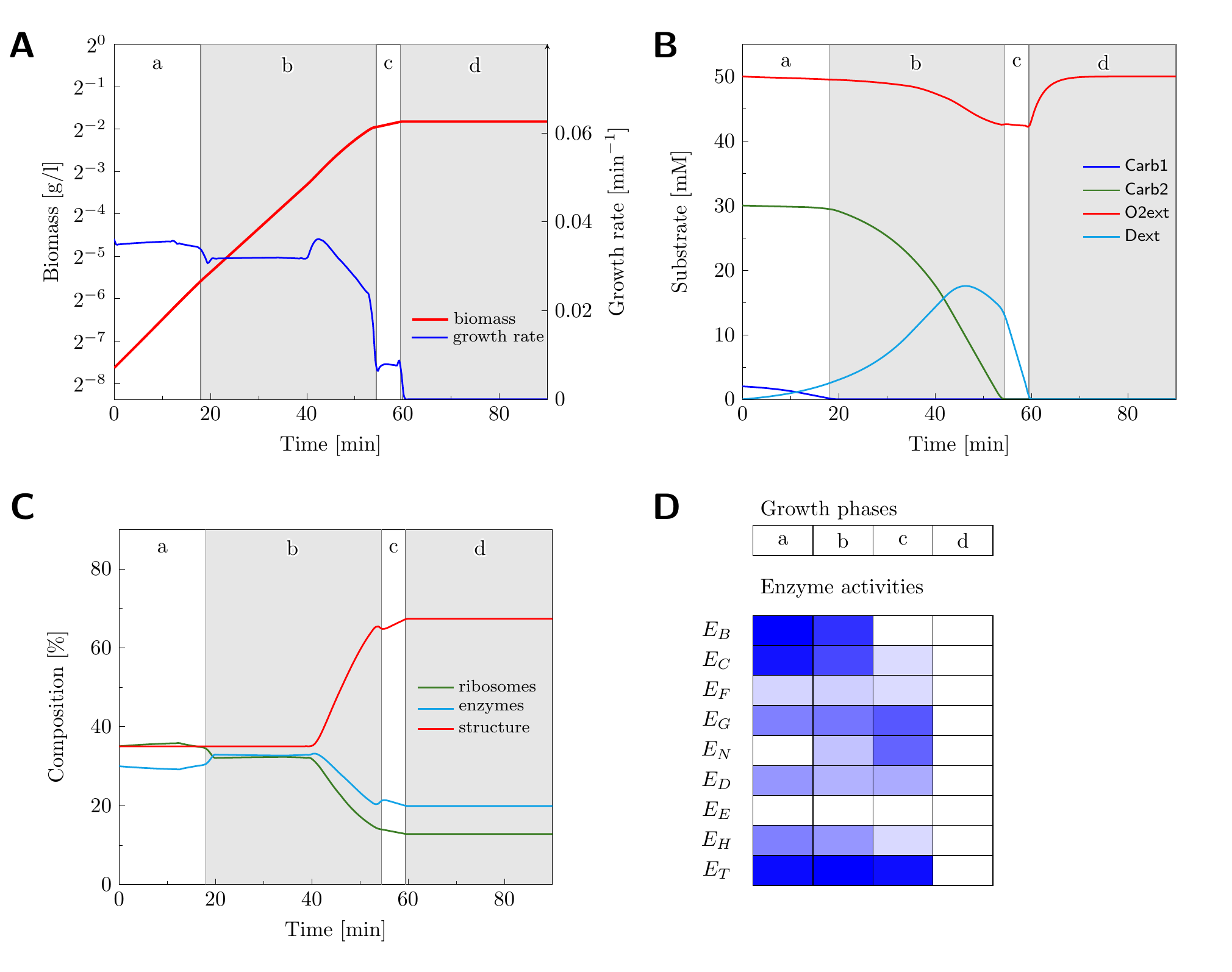}
  \caption{Dynamic optimization results for the core metabolic-genetic network in Scenario 1 (two carbon sources). Time intervals a-d show different growth regimes according to substrate availability. 
A: Biomass and growth rate. 
B: Concentrations of extracellular metabolites. 
C: Dry weight percentages for aggregate cellular components.  
D: Enzyme activity during growth phases a--d, averaged over each phase. Stronger blue denotes higher relative enzyme activity within a phase. See Table~\ref{tab:core-network-metabolic} for enzyme labels.
}
  \label{fig:core-network-results-carbon}
\end{figure}

Figure~\ref{fig:core-network-results-carbon}D illustrates the predicted reorganization of the metabolic network in terms of the enzyme activities.
Glycolysis, represented by the enzyme ${\bio{E_B}}$, was only active in phases a and b.
The ${\bio{C}}$-${\bio{G}}$ cycle, involving enzymes ${\bio{E_G}}$ and ${\bio{E_N}}$, was most active in phase c, where ${\bio{ATP}}$ production from glycolysis ceased and had to be substituted by respiration.
There was a significant drop in activity of the amino acid synthesis pathway represented by ${\bio{E_H}}$ in phase c, presumabely related to the lower enzyme and ribosome biomass fraction in that phase.

\begin{figure}[tbp]
  \centering
  \includegraphics[width=\linewidth]{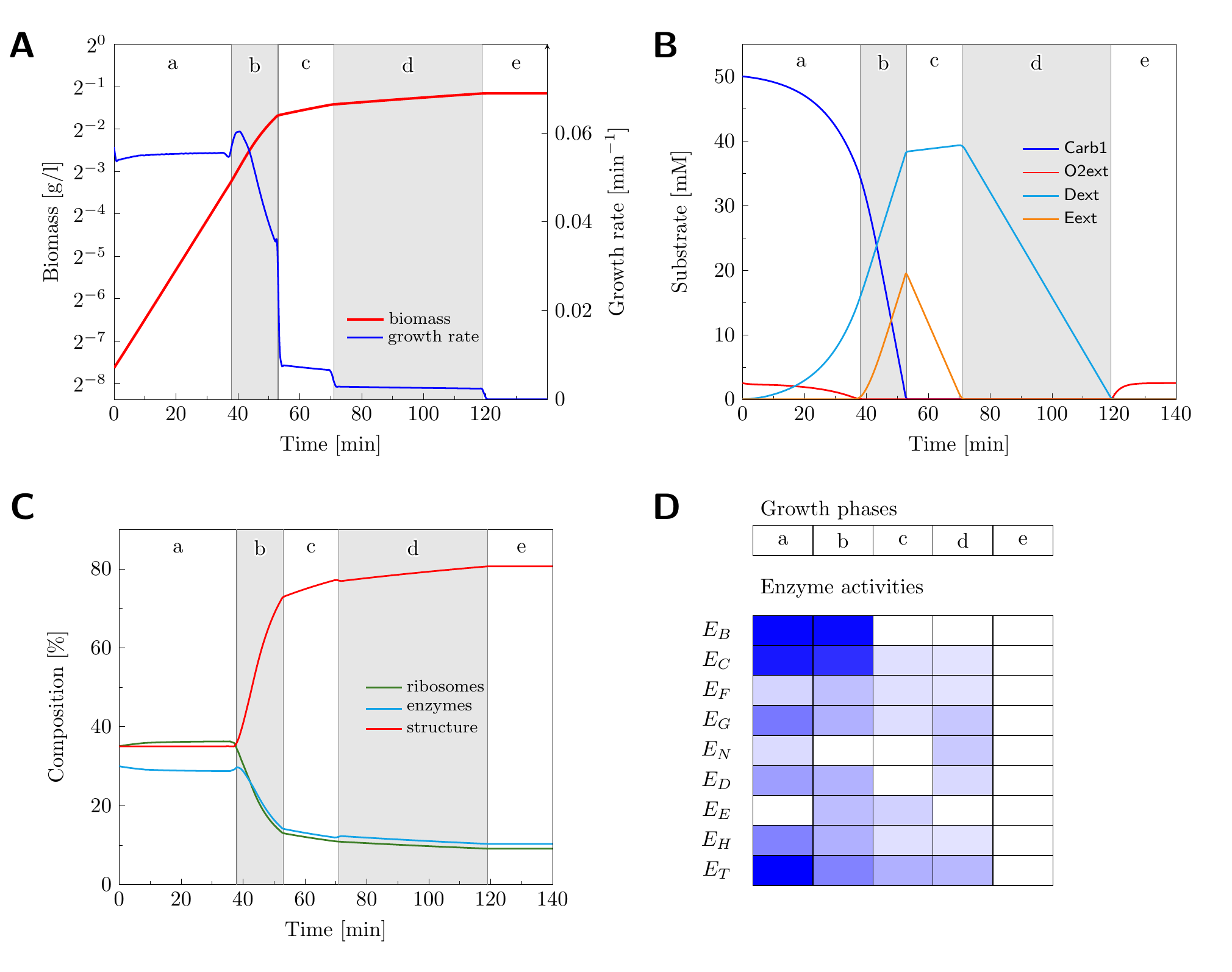}
  \caption{Dynamic optimization results for the core metabolic-genetic network in Scenario 2 (oxygen limitation). Phases a-e show different growth regimes according to substrate availability. 
A: Biomass and growth rate. 
B: Concentrations of extracellular metabolites. 
C: Dry weight percentages for aggregate cellular components. 
D: Enzyme activity during growth phases a--e, averaged over each phase. Stronger blue denotes higher relative enzyme activity within a phase.
 }  
  \label{fig:core-network-results-anaerobic}
\end{figure}

\subsection{Scenario 2: oxygen limitation}
\label{sec:aero-anaero-scen}

In this scenario, we studied cell growth under limited oxygen availability.
The deFBA predictions display five growth phases, labelled a--e in Figure~\ref{fig:core-network-results-anaerobic}.
In phase a, cells grew aerobically on ${\bio{Carb1}}$.
In phase b, oxygen was depleted and cells continued to grow anaerobically, producing both waste metabolites ${\bio{E}}$ and ${\bio{D}}$.
There was also a significant drop in the growth rate during phase b.
The growth phases c and d show that there is also a preferential order in the re-metabolization of waste products.

With the small continuous supply of oxygen that is considered in the model, cells first grew exclusively on ${\bio{E}}$ (phase c), consuming ${\bio{D}}$ only after ${\bio{E}}$ has been depleted (phase d). Phase e was then the stationary phase.

From the enzyme activities shown in Figure~\ref{fig:core-network-results-anaerobic}\,D, we got similar results as in scenario 1 concerning glycolysis and amino acid synthesis.
According to the oxygen limitation, there was a significant drop in the respiratory activity, represented by the enzyme ${\bio{E_T}}$, after phase~a.

\subsection{Scenario 3: transition to and growth in rich medium}
\label{sec:rich-med-scen}

In this scenario, we studied the effect of adding amino acids (${\bio{H_{ext}}}$) and lipids (${\bio{F_{ext}}}$) to the medium for cells grown previously on ${\bio{Carb1}}$ alone.
Initially, cells were assumed to be pre-adapted to a medium without amino acids and lipids, and the optimization predicted an initial transient where cells adapted to the new medium. 
The model predicts expression of amino acid and lipid transporters so as to shift from synthesis to uptake of amino acids and lipids from the medium.
We then observed five growth phases, labelled a--e in Figure~\ref{fig:core-network-results-rich}. 

In phase a, where also the generic amino acid ${\bio{H}}$ is available in the medium, cells grew with a significantly higher rate than in the other scenarios.
This higher growth rate presumably results from decreased enzyme cost for biosynthesis pathways during this phase.
Also, during phases a and b the ribosomal biomass fraction was significantly increased (Figure~\ref{fig:core-network-results-rich}\,C), while the enzymatic biomass fraction was reduced.
This agrees with previous studies about the dependence of the growth rate on global cellular parameters, where increased mRNA and decreased protein fractions have been associated with increased growth rate \cite{KlumppZha2009}.
After depletion of external ${\bio{F}}$ and ${\bio{H}}$, the ratio between ribosomes and enzymes returned to the original state, while the cells continued to grow on ${\bio{Carb1}}$ alone (phase c).
When ${\bio{Carb1}}$ was depleted, cells re-metabolized the waste product ${\bio{D}}$ (phase d), before entering stationary phase (e).

An additional observation from the enzyme activities shown in Figure~\ref{fig:core-network-results-rich}\,D is that glycolysis was most active only in phase b, presumably due to the need of precursor molecules for amino acid synthesis after the extracellular supply of ${\bio{H}}$ had been used up.

\begin{figure}[tbp]
  \centering
  \includegraphics[width=\linewidth]{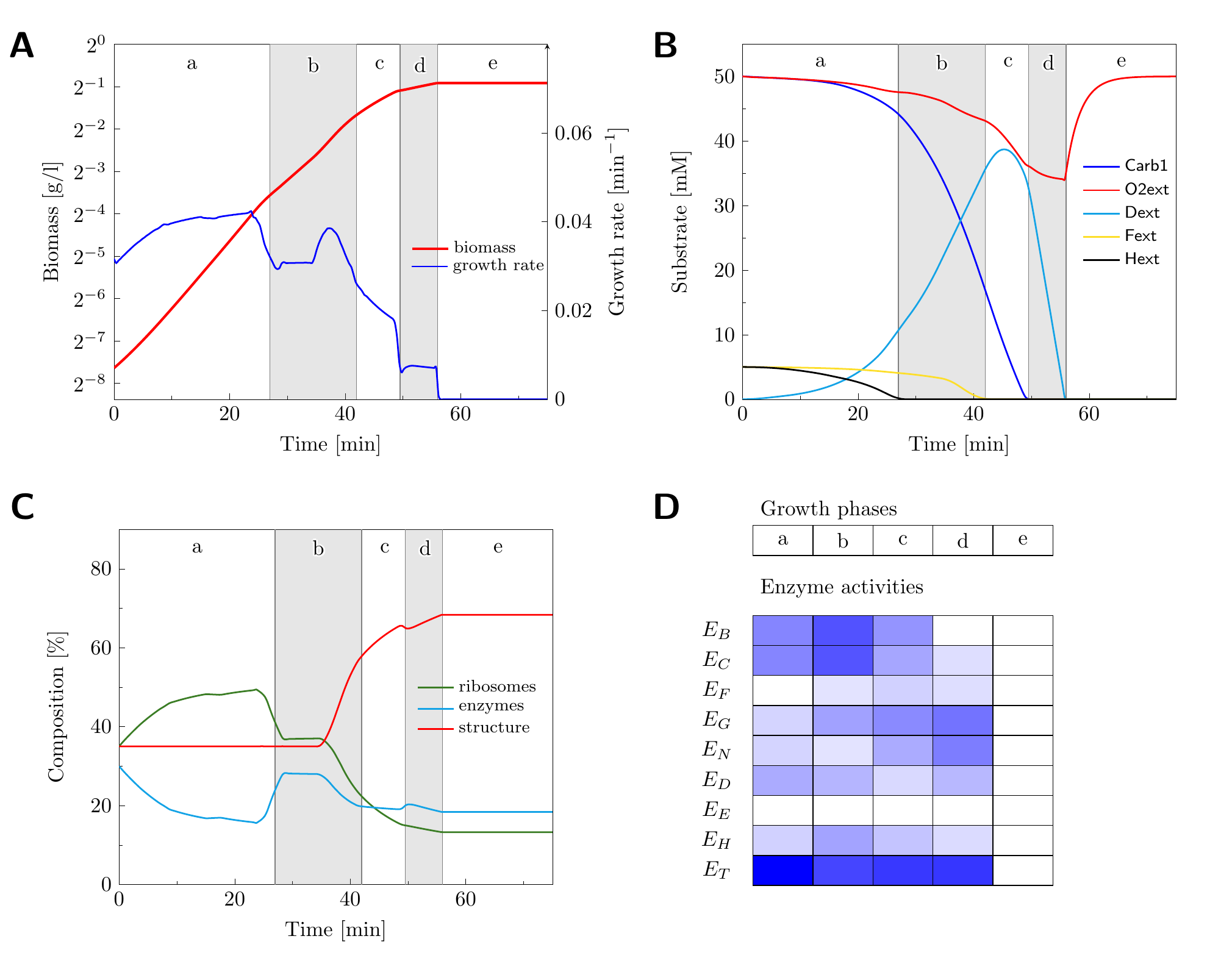}
  \caption{Dynamic optimization results for the core metabolic-genetic network in Scenario 3 (rich medium). Phases a-e show different growth regimes according to substrate availability. 
A: Biomass and growth rate. 
B: Concentrations of extracellular metabolites. 
C: Dry weight percentages for aggregate cellular components.  
D: Enzyme activity during growth phases a--e, averaged over each phase. Stronger blue denotes higher relative enzyme activity within a phase.
}
  \label{fig:core-network-results-rich}
\end{figure}

\section{Discussion}
\label{sec:conclusions}

In this paper we have presented a dynamic optimization approach for metabolic networks coupled with gene expression.
The proposed deFBA method can be used to predict the dynamic adaptation of intracellular metabolites and biomass composition under environmental perturbations.
Based on a biophysically motivated approximation by separation of timescales, we obtained a reduced model for metabolic networks coupled with biomass dynamics. 
In the reduced model, the metabolic fluxes are computed from a quasi steady state approximation, while the nutrient concentrations, enzyme expression levels and other biomass components are described by a system of differential equations.

The inclusion of a detailed biomass composition with constraints on fluxes from enzyme capacity was previously proposed in RBA \cite{GoelzerFro2011} and ME networks \cite{LermanHyd2012} for steady state models.
We developed a dynamic optimization method, called dynamic enzyme-cost flux balance analysis (deFBA), which includes the biomass composition and related enzyme capacity constraints in a dynamic model.
The algorithm allows predicting the time courses of metabolic fluxes and biomass concentrations from an optimality principle.
Our approach provides a generalization to the currently established dynamic flux balance analysis (dFBA) reported in \cite{MahadevanEdw2002}, and offers two advantages.
Firstly, it can readily account for constraints on enzyme levels and their biosynthetic cost. 
Secondly, it is based on computationally efficient linear optimization techniques and thus seems well suited for large scale networks. 
Computational costs can become particularly limiting in dFBA, as it is based on non-linear optimization techniques that are hard to scale with network size.
In addition, deFBA solves one optimization problem for the dynamics over the complete time interval, thus also avoiding numerical problems in differential equations constrained by linear programs \cite{HoffnerHar2013}, as encountered in iterative FBA or the static optimization approach of dFBA \cite{VarmaPal1994,MahadevanEdw2002}.

The deFBA approach allows predicting changes in the enzyme expression levels from an optimization principle alone, without the need of explicit models for gene regulatory interactions as proposed in other FBA-related methods \cite{CovertSch2001,CovertPal2002}.
We exploited this feature by using deFBA to predict metabolic adaptations in two biologically relevant metabolic-genetic systems: a minimal nutrient uptake network and a larger core carbon uptake system.

For the minimal network we showed numerically and analytically that the optimal growth kinetics are composed of an exponential and a stationary phase.
These biphasic growth kinetics appear in the solutions that maximize the discounted biomass integral or minimize the nutrient metabolization time. 
We subsequently found that these optimal growth kinetics are essentially equivalent to the classical Monod model of bacterial growth \cite{Monod1949} and can be accurately modeled with uptake kinetics following a Michaelis-Menten law.
This result can be interpreted as a rigorous derivation of the Monod growth model, which was so far an empirical model based on observed growth dynamics.
The close correspondence of the Monod growth model with the dynamic optimization results suggests that maximization of the discounted biomass integral or the minimization of the metabolization time are biologically plausible objectives in a dynamic context, and in fact both have been previously used in evolutionary \cite{Frank2010} or metabolic pathway studies \cite{Klipp2002,Oyarzun2009}, respectively. 

The plausibility of other objective functionals should also be examined in deFBA, similarly as has been done for static flux balance analysis \cite{SchuetzKue2007}.
However, some objectives which have been proposed there, such as the maximization of ${\bio{ATP}}$ production, seem to be more appropriate for networks focussing on the metabolic level alone, and may not be appropriate for metabolic-genetic networks.
A dynamic optimization approach opens up the possibility of using other objective functions that do not apply in a static setting, such as the minimization of the metabolization time introduced in Section~\ref{sec:analys-minim-metab}.

We also considered the maximization of the final biomass in the minimal network, as this was one of the original objectives proposed with dFBA \cite{MahadevanEdw2002}.
The predictions, however, showed a significant variability due to non-uniqueness of the optimal solution. 
Non-unique optimal solutions are a common problem with all constraint-based models \cite{MahadevanSch2003}. 
In networks with two equivalent parallel pathways, classical FBA would not be able to predict which of the two is active.
This situation can be remedied by considering different enzyme costs for the two pathways, as first suggested in \cite{GoelzerFro2011} for static optimization, and in our study for the dynamic case (Section~\ref{sec:carb-switch-scen}).
The use of static objective functionals in a dynamic setting can aggravate the problem of non-uniqueness.
For example, as we observed in Figure~\ref{fig:result-min-metabogen}, when maximizing the terminal biomass, a dynamic approach may not be able to decide whether a single pathway is active early or late within the considered time horizon.
The alternative objective functionals that we proposed in this study remedied the problem of non-unique optimal solutions in the time domain, as growth at an earlier time would give a better objective functional value than growth at a later time.

Our study on the core carbon network used a larger model that includes the uptake of different extracellular species (including nutrients, oxygen, and organic precursor molecules) together with some of the main energy turnover processes, and the assembly of ribosomes and enzymes.
We applied the deFBA method to predict the growth kinetics and time courses for substrates and biomass composition that maximized the discounted biomass integral in three scenarios.
These scenarios were chosen from classical examples of metabolic adaptation processes: the switch from one carbon source to another, growth under reduced oxygen availability, and the transition of cells to a rich medium, which subsequently gets depleted of nutrients.
The resulting growth kinetics reproduce a number of known biological observations, such as the overall cellular composition under different growth conditions \cite{KlumppZha2009} or a hierarchy of preferences for different carbon sources \cite{ChangSma2004}.
The results show that a significant change in gene regulation before an impending nutrient depletion would be optimal for cellular growth.
In view of recent experimental results on gene regulation during the glycolytic-gluconeogenic switch \cite{ZamparKuem2013}, we suggest growth optimality over the full time period as theoretical explanation for regulatory activity before the switching time point.

The metabolic part of the model in Section~\ref{sec:dynam-optim-core} was first analyzed with regulatory FBA (rFBA) in \cite{CovertSch2001}.
We compared these results to our results obtained by the deFBA method.
Both approaches result in a sequence of growth phases with a distinct metabolic flux pattern.
For the specific scenarios, both the carbon switch and the anaerobic growth have also been studied in \cite{CovertSch2001} and gave similar results, for example the preference for one carbon source over the other.
Importantly, in rFBA, this preference resulted from explicitly building it into the regulatory logic, while with deFBA, it followed implicitly from the different enzymatic efficiencies and metabolic costs of the alternative pathways.
While our results predict the re-metabolization of fermentation products, similar to an observed acetate re-utilization in \textit{E.\ coli} \cite{VarmaPal1994}, which has also been reproduced in iterative FBA \cite{VarmaPal1994} and dynamic FBA \cite{MahadevanEdw2002},
the corresponding reactions seemed to be modelled as irreversible in the rFBA study \cite{CovertSch2001}, which would have prevented the re-metabolization there.

An important general distinction between rFBA and deFBA concerns the biological knowledge required to build the network models.
In rFBA, fluxes are constrained by Boolean rules modeling regulatory mechanisms, which have to be known in the modeling step.
With deFBA, regulatory interactions are not included in the model, but the specific enzymes for each reaction together with their metabolic production costs are added to the network and thus need to be known.
Importantly, these two approaches do not exclude each other: it should be well possible to construct network models with both regulatory constraints and constraints from the capacity and metabolic costs of individual enzymes.

Although we have focussed on the metabolic constraints relating to enzymatic capacity, the deFBA framework readily allows for inclusion of thermodynamic constraints on metabolic fluxes \cite{HenryBro2007}. 
Moreover, a recent study has suggested that constraints on gene regulatory mechanisms may also be relevant and contribute to some mismatch between observed gene expression and fitness levels predicted from the theoretical optimum based on metabolic constraints \cite{PriceDeu2013}. 
The inclusion of suitable constraints on the regulatory mechanisms in a metabolic optimization framework is an open problem, but we suggest that it may become tractable in a dynamic optimization setup such as deFBA, for example by including metabolic costs of gene regulation.
In addition, the objective functions considered here do not take the robustness against unpredictable changes in the environment into account, which may also contribute to slower growth than predicted from a pure growth rate optimization \cite{GoelzerFro2011a}.

In essence, the approach presented here can predict the temporal regulation of gene expression from an optimization principle, without requiring any knowledge of regulatory interactions.
It yields predictions for biomass dynamics in metabolic adaptations, while respecting constraints of enzymatic capacity and mass conservation.

% Do NOT remove this, even if you are not including acknowledgments
%\section*{Acknowledgments}

%%%%%%%%%%%%%%%%%%%%%%%%%%%%%%%%%%%%%%%%%%%%%%%%%
%%%%%%%%%%%%%%%%%%%%%%%%%%%%%%%%%%%%%%%%%%%%%%%%%
%% The Appendices part is started with the command \appendix;
%% appendix sections are then done as normal sections
%% \appendix

%% \section{}
%% \label{}
%%%%%%%%%%%%%%%%%%%%%%%%%%%%%%%%%%%%%%%%%%%%%%%%%
%%%%%%%%%%%%%%%%%%%%%%%%%%%%%%%%%%%%%%%%%%%%%%%%%
\appendix

\section{Derivation of the long time scale models}
\label{sec:derivation-long-time-scale}

Since the reaction fluxes $V_i(t,y,x,P)$ and the cellular volume $\vartheta_c(t,P)$ are assumed to be slowly varying,
we could rewrite them on the long time scale as
\begin{equation}
  \label{eq:activity-volume-long-timescale}
  \begin{aligned}
    V_i\left(\varepsilon^{-1} T, y, x, P\right) &= \tilde V_i\left(T, y, x, P\right) \\
    \vartheta_c\left(\varepsilon^{-1} T, P\right) &= \tilde \vartheta_c(T, P) \\
  \end{aligned}
\end{equation}
even in the limit $\varepsilon \rightarrow 0$.

On the long time scale, the metabolic-genetic network is then rewritten as
\begin{equation}
  \label{eq:net-long-timescale}
  \begin{aligned}
    y^\prime &= - \frac{1}{\varepsilon\, \vartheta_e}\, S^y_y\, \tilde V_y\left(T,y, \frac{X}{\tilde\vartheta_c(T,P)},P\right)\\
    \varepsilon X^\prime &= S^x_y \, \tilde V_y\left(T, y, \frac{X}{\tilde\vartheta_c(T,P)}, P\right) + S^x_x \, \tilde V_x\left(T, \frac{X}{\tilde\vartheta_c(T,P)}, P\right) \\&\quad- \alpha\, \varepsilon\, S^x_p \, \tilde V_p\left(T, \frac{X}{\tilde\vartheta_c(T,P)}, P\right) \\
    P^\prime &= S^p_p \, \tilde V_p\left(T, \frac{X}{\tilde\vartheta_c(T,P)}, P\right),
  \end{aligned}
\end{equation}
where we used $X^\prime = dX/dT$ to denote the time derivative on the long time scale.

Based on the model~\eqref{eq:net-long-timescale}, the quasi steady state equation for Condition~1 in Section~\ref{sec:rigor-deriv-quasi} is given by
  \begin{equation}
    \label{eq:quasi-steady-state-equation-app}
       \begin{aligned}
   S^x_y \, \tilde V_y\left(T, y, \frac{X}{\tilde\vartheta_c(T,P)}, P\right) + S^x_x \, \tilde V_x\left(T, \frac{X}{\tilde\vartheta_c(T,P)}, P\right)&  \\
   -\alpha\, \varepsilon\, S^x_p \, \tilde V_p\left(T,\frac{X}{\tilde\vartheta_c(T,P)}, P\right) &= 0,
    \end{aligned}
  \end{equation}
and the boundary layer model~\eqref{eq:boundary-layer} is more explicitly written as
  \begin{equation}
    \label{eq:boundary-layer-app}
       \begin{aligned}
    \dot X =& S^x_y \, \tilde V_y\left(T, y, \frac{X}{\tilde\vartheta_c(T,P)}, P\right) + S^x_x \, \tilde V_x\left(T, \frac{X}{\tilde\vartheta_c(T,P)}, P\right) \\&- \alpha\, \varepsilon\, S^x_p \, \tilde V_p\left(T, \frac{X}{\tilde\vartheta_c(T,P)}, P\right),
       \end{aligned}
  \end{equation}
where the slow variables $T$, $y$, and $P$ are considered as constant in~\eqref{eq:quasi-steady-state-equation-app} and~\eqref{eq:boundary-layer-app}.

The reduced model is constructed as
\begin{equation}
  \label{eq:reduced-model-long-timescale}
  \begin{aligned}
    y^\prime &= - \frac{1}{\varepsilon\, \vartheta_e}\, S^y_y\, \tilde V_y\left(T, y, \frac{q(T,y,P)}{\tilde\vartheta_c(T,P)}, P\right)\\
    P^\prime &= S^p_p \, \tilde V_p\left(T, \frac{q(T,y,P)}{\tilde\vartheta_c(T,P)}, P\right),
  \end{aligned}
\end{equation}
where $X = q(T,y,P)$ is the solution to the quasi steady state constraint~\eqref{eq:quasi-steady-state-equation-app}.
Going back to the original $t$ time scale and to units of molar amount for $Y$, the reduced dynamics are
\begin{equation}
  \label{eq:reduced-model}
  \begin{aligned}
    \dot Y &= - S^y_y\, V_y\left(t, \frac{Y}{\vartheta_e}, \frac{q(\varepsilon t,Y/\vartheta_e,P)}{\vartheta_c(t,P)}, P\right)\\
    \dot P &= \varepsilon S^p_p \, V_p\left(t, \frac{q(\varepsilon t,Y/\vartheta_e,P)}{\vartheta_c(t,P)}, P\right).
  \end{aligned}
\end{equation}

\section{Numerical solution of the dynamic optimization problem by collocation}
\label{sec:numerical-solution-collocation}

In this section, we discuss the numerical solution of the optimization problem~\eqref{eq:metabolic-opt-problem} in more detail.
As a first step, let us consider the case where the terminal time $t_f$ is fixed \emph{a priori}.
In the collocation scheme, the time interval $[0,t_f]$ is divided into $N$ equally sized intervals, each of length
\begin{equation}
\label{app-eq:collocation-intervals}
h = \frac{t_f}{N}\,.
\end{equation}
Within each interval, $K$ collocation points are determined. 
All collocation points are given by the sequence
\begin{equation}
  \label{app-eq:collocation-points}
  t_{1,1}, t_{1,2}, \dotsc, t_{1,K}, t_{2,1}, \dotsc, t_{N,K}.
\end{equation}
Within each interval, the $q$-th collocation point is at position $r_q$ (relative to the interval $[-1,1]$), where $r_q$ is determined by the collocation scheme and order.
In the computational experiments for this study, we used Radau collocation points of order 2 and 3, determined by zeros of the Legendre polynomials \cite{Biegler2007}.
The collocation points are thus computed as
\begin{equation}
  \label{app-eq:collocation-points-formula}
  t_{i,q} = (i-1)\, h + (r_q + 1)\, \frac{h}{2}.
\end{equation}

The flux variable and the derivative of the species variable are discretized by the following interpolation scheme:
\begin{equation}
  \label{app-eq:variable-discretization}
  \begin{aligned}
    V(t) &= \sum_{q=1}^K v_{i,q}\, L_q\left(\frac{2 t - 2 t_{i-1} - h}{h}\right), \qquad t_{i-1} \leq t \leq t_i \\
    \dot Z(t) &= \sum_{q=1}^K \dot z_{i,q}\, L_q\left(\frac{2 t - 2 t_{i-1} - h}{h}\right), \qquad t_{i-1} \leq t \leq t_i,
  \end{aligned}
\end{equation}
where $L_q$, $q=1,\ldots,K$ are suitable interpolation functions defined on the interval $(-1,1)$.
In this study, we used the Lagrange polynomials
\begin{equation}
  \label{app-eq:lagrange}
  L_q(r) = \prod_{1\leq i \leq K,\ i\neq q}^K \frac{r - r_i}{r_q - r_i}
\end{equation}
as interpolation functions \cite{GargPat2009}.
The boundaries of the time intervals are given by $t_i = i h$, $i=1,\dotsc,N$ and $t_0 = 0$.

The species variable $Z$ is discretized at the boundaries of the $N$ intervals in time, and its value within an interval is approximated by integrating over the time derivative:
\begin{equation}
  \label{app-eq:state-approx}
  \begin{aligned}
    Z(t_{i-1} + \tau) &= z_{i-1} + \int\limits_{0}^{\tau} \dot Z(t_{i-1} + s) \;ds \\[0.5\baselineskip]
    &= z_{i-1} + \sum_{q=1}^K \dot z_{i,q} \int\limits_0^\tau L_q(2s/h - 1) \;ds \\[0.5\baselineskip]
    &= z_{i-1} + h/2 \sum_{q=1}^K \dot z_{i,q} \int\limits_{-1}^{r(\tau)} L_q(s)\; ds,
  \end{aligned}
\end{equation}
with $r(\tau) = 2\tau/h - 1$.

The continuous optimization problem~\eqref{eq:metabolic-opt-problem} is now approximated by a finite-dimensional problem, in which the optimization is done over the vector $w \in \Real^{NK(n+m)+Nn}$, defined by
\begin{equation}
  \label{app-eq:discrete-optimization-vector}
  w = (v_{1,1}, v_{1,2}, \dotsc, v_{N,K}, \dot z_{1,1}, \dot z_{1,2} \dotsc, \dot z_{N,K}, z_1, \dotsc, z_N),
\end{equation}
with $v_{i,q}$, $\dot z_{i,q}$, and $z_i$, $i=1,\dotsc,N$, $q=1,\dotsc,K$ corresponding to the interpolation coefficients in \eqref{app-eq:variable-discretization} and \eqref{app-eq:state-approx}.

The discretized optimization problem is written as
\begin{equation}
  \label{app-eq:discrete-oc-problem-fixedtime}
  \begin{aligned}
    \max_{w} &\ c\T w + e \\[0.5\baselineskip]
    \textnormal{s.t. } & M_e w = c_e\\[0.5\baselineskip]
    & M_i w \leq c_i,
  \end{aligned}
\end{equation}
where the vector $c$ and number $e$ stem from the discretization of the the objective functional, the equality constraint matrix $M_e$ and vector $c_e$ from collocation of the differential equation and the initial condition, and the inequality constraint matrix $M_i$ and vector $c_i$ from the path and terminal constraints in \eqref{eq:metabolic-opt-problem}.
The optimization problem \eqref{app-eq:discrete-oc-problem-fixedtime} is a linear program and can directly be solved by common optimization software.

\section{Analytical results for the minimal metabolic-genetic network}
\label{sec:analytical-minimal-mg-net}
\subsection{Problem statement}
\label{sec:problem-statement}

In this section we show how to obtain the analytic solution of the optimization problems in Section~\ref{sec:analys-minim-metab}. 
We seek to solve the following dynamic optimization problem:
\begin{align}
	\max_{V_{y}} J_{i},\qquad i=2,3.
\end{align}
in the time interval $\lc0,t_{f}\rc$ subject to the model dynamics
\begin{align}
	\dot{Y} &= -V_{y},\\
	\dot{P} &= \varepsilon V_{p}
\end{align}
and the quasi steady state constraint
\begin{equation}
  V_y = \alpha \varepsilon V_p,
\end{equation}
together with the positivity constraints
\begin{align}
	Y\geq 0,\quad P\geq0,\quad V_{y}\geq0,\quad V_{p}\geq0
\end{align}
and the enzyme capacity constraint
\begin{align}
	\frac{V_{y}}{k_{\cat,y}} + \frac{\varepsilon V_{p}}{k_{\cat,p}} \leq P.
\end{align}
The objective functions are the discounted biomass integral
\begin{align}
	J_{2} &= \int_{0}^{t_{f}} P(\tau)e^{-\varphi\tau}\mathrm{d}\,\tau,
\end{align}
and the time needed to metabolize all the nutrient
\begin{align}
	J_{3} &= -t_{f}.
\end{align}
In the case of objective $J_{2}$, the final time $t_{f}$ is pre-specified. In the case of $J_{3}$, the final time $t_{f}$ is free and subject to an additional terminal constraint $Y(t_{f})=0$.

Using the quasi steady state approximation $V_{y}=\alpha\varepsilon V_{p}$, the  model dynamics can be written as 
\begin{align}
	\dot{Y} &= -V_{y},\label{eq:ydot}\\
	\dot{P} &= V_{y}\slash\alpha,\label{eq:pdot}
\end{align}
and we rewrite the enzyme capacity constraint as 
\begin{align}
	V_{y} &\leq P\slash K,
\end{align}
with 
\begin{equation}
  \label{eq:K}
  K=1\slash k_{\cat,y} + 1\slash (\alpha k_{\cat,p}). 
\end{equation}
In the next sections we show how to obtain analytic solutions to both of these problems.

\subsection{Solution for objective $J_{2}$}

 To avoid ambiguities, from now on the star $^{*}$ denotes optimal trajectories. We first note that $V_{y}\geq 0$ implies that $P\geq0$, because $P_{0}\geq0$ and $\dot{P}=V_{y}\slash\alpha \geq 0$ for all $V_{y}\geq0$ and $\alpha>0$. In addition, by the quasi steady state constraint $V_{y}=\alpha\varepsilon V_{p}$, we have that $V_{y}\geq 0$ implies that $V_{p}\geq 0$, and therefore the optimization problem can be simplified to
\begin{align}
	\max J_{2}
\end{align}
subject to the dynamics \eqref{eq:ydot}--\eqref{eq:pdot}, and the constraints
\begin{align}
	V_{y}\leq P\slash K,\, V_{y}\geq 0,\,\,\mathrm{and}\,\,Y\geq0.
\end{align}
The optimal nutrient dynamics are $\dot{Y}^{*}=-V_{y}^{*}$, which can be solved by integration
\begin{align}
	Y^{*}(t) &= Y_{0} - \int_{0}^{t} V_{y}^{*}(\tau)\mathrm{d}\tau.
\end{align}
This means that for any initial condition $Y_{0}>0$, the optimal nutrient concentration will reach $Y^{*}=0$ only if there exists $0<t_{s}\leq t_{f}$ such that
\begin{align}
	\int_{0}^{t_{s}} V_{y}^{*}(\tau)\mathrm{d}\tau &= Y_{0}.\label{eq:condts}
\end{align}

We will now obtain the optimal uptake rate assuming that equation \eqref{eq:condts} does not have a solution for $t_{s}$. We will then use this solution \emph{a posteriori} to analyze the case when \eqref{eq:condts} does have a solution.
If equation \eqref{eq:condts} does not have a solution, the nutrient does not deplete and therefore constraint $Y\geq0$ never becomes active in the optimization interval $\lc0,t_{f}\rc$. We can thus ignore this constraint and reduce the optimization problem to
\begin{align}
	\max J_{2}
\end{align}
subject to \eqref{eq:ydot}--\eqref{eq:pdot} and $0\leq V_{y}\leq P\slash K$. Since $J_{2}$ grows as $P(t)$ grows (pointwise in time) and $\dot{P}=V_{y}\slash\alpha\geq 0$, the optimal uptake rate $V_{y}^{*}$ must be maximal pointwise in time while respecting $0\leq V_{y}\leq P\slash K$. Intuitively, this means that the optimal rate satisfies $V_{y}^{*}=P^{*}\slash K$, but we can also alternatively use a proof by contradiction as follows. Assume that in the optimal solution the constraint $V_{y}\leq P\slash K$ is not active, so that $V_{y}^{*}=P^{*}\slash K - \delta(t)$ with $\delta(t)>0$ on a subinterval of $\lc 0,t_{f}\rc$. Substituting this optimal rate in $\dot{P}^{*}=V_{y}^{*}\slash\alpha$ we get
\begin{align}
	\dot{P}^{*} &= \frac{P^{*}}{K\alpha} - \frac{\delta}{\alpha}.\label{eq:proof1}
\end{align}
Equation \eqref{eq:proof1} is a linear inhomogeneous differential equation with solution
\begin{align}
	P^{*}(t) &= P_{0}e^{\frac{t}{K\alpha}} - e^{\frac{t}{K\alpha}}\int_{0}^{t} e^{\frac{-\tau}{K\alpha}}\delta(\tau)\mathrm{d}\tau.
\end{align}
Using the definition of $J_{2}=\int_{0}^{t_{f}}P(t)e^{-\varphi t}\mathrm{d}t$, we can compute the value of the corresponding optimal objective as
\begin{align}
	J_{2}^{*} &= P_{0}\int_{0}^{t_{f}} e^{\frac{t}{K\alpha}}e^{-\varphi t}\dt - \underbrace{\int_{0}^{t_{f}}e^{\frac{t}{K\alpha}}e^{-\varphi t}\lp \int_{0}^{t} e^{\frac{-\tau}{K\alpha}}\delta(\tau)\mathrm{d}\tau\rp\dt}_{>0},\\
	&< P_{0}\int_{0}^{t_{f}} e^{\frac{t}{K\alpha}}e^{-\varphi t}\dt,\label{eq:proof2}
\end{align}
which contradicts the optimality of $J_{2}^{*}$, and therefore we conclude that $\delta(t)=0$ on the interval $\lc 0,t_{f}\rc$, apart from a set of zero measure. The optimal uptake rate and biomass concentration are then
\begin{align}
	V_{y}^{*}(t) &= \frac{P_{0}}{K}e^{\frac{t}{K\alpha}},\,\,\mathrm{for}\,\, t\in\lc 0,t_{f}\rc,\label{eq:exponential1}\\
	P^{*}(t) &= P_{0}e^{\frac{t}{K\alpha}},\,\,\mathrm{for}\,\, t\in\lc 0,t_{f}\rc.\label{eq:exponential2}
\end{align}
From these expressions we can \emph{a posteriori} obtain a condition for equation \eqref{eq:condts} to have a solution for $t_{s}$. Substituting the optimal rate $V_{y}^{*}$ in \eqref{eq:condts}, we get an equation for $t_{s}$:
\begin{align}
	\alpha\lp e^{\frac{t_{s}}{K\alpha}} - 1\rp &= \frac{Y_{0}}{P_{0}},
\end{align}
and solving for $t_{s}$ we get
\begin{align}
	t_{s} &= K\alpha \log\lp1 + \frac{Y_{0}}{\alpha P_{0}}\rp.
\end{align}
Therefore, if $t_{f}< t_{s}$ the nutrient never depletes and the optimal solution are the exponentials in \eqref{eq:exponential1}--\eqref{eq:exponential2}. Conversely, if $t_{f}\geq t_{s}$ the equation in \eqref{eq:condts} has a solution for $t_{s}$ and we can establish that
\begin{align}
	Y^{*}(t) &=0,\,\,\mathrm{for}\,\, t\in\lc t_{s},t_{f}\rc,\label{eq:condtsY}\\
	V_{y}^{*}(t) &=0,\,\,\mathrm{for}\,\, t\in\lc t_{s},t_{f}\rc.\label{eq:condtsVy}
\end{align}
Note that \eqref{eq:condtsY} is true because when $Y^{*}$ reaches zero at $t=t_{s}$, it can only be made positive by a negative $V_{y}^{*}$, which would violate the positivity constraint $V_{y}\geq0$. Similarly, if \eqref{eq:condtsVy} is not true, then $V_{y}^{*}$ must become positive for some non empty time interval after $t=t_{s}$, but this would imply that $Y^{*}<0$ in that interval, thereby violating the positivity constraint $Y\geq0$.  We thus conclude that for $t\geq t_{s}$, the network enters stationary phase and the biomass remains constant:
\begin{align}
	P^{*}(t) &=P^{*}(t_{s}),\,\,\mathrm{for}\,\, t\in\lc t_{s},t_{f}\rc.
\end{align}

\subsection{Solution for objective $J_{3}$}
In this case the optimal solution for objective $J_{3}$ can be obtained with similar arguments as the one for $J_{2}$. Maximization of $J_{3}$ is equivalent to minimization of the time it takes the nutrient to deplete ($t_{f}$). Since $\dot{Y}=-V_{y}<0$ it follows that $t_{f}$ decreases as $V_{y}$ grows pointwise in time. This essentially means that the constraint $V_{y}\leq P\slash K$ must be active for $t\in\lc 0,t_{f}\rc$ and therefore the optimal solution is $V_{y}^{*}=P^{*}\slash K$. Following a similar procedure as in the case of $J_{2}$,  we have that the optimal uptake rate and biomass concentrations are exponentials
\begin{align}
	V_{y}^{*}(t) &= \frac{P_{0}}{K}e^{\frac{t}{K\alpha}},\,\,\mathrm{for}\,\, t\in\lc 0,t_{f}\rc,\label{eq:exponential3}\\
	P^{*}(t) &= P_{0}e^{\frac{t}{K\alpha}},\,\,\mathrm{for}\,\, t\in\lc 0,t_{f}\rc.\label{eq:exponential4}
\end{align}
Note that, analogously to equation \eqref{eq:condts}, in this case the time needed for nutrient depletion, $t_{f}$, satisfies
\begin{align}
	\int_{0}^{t_{f}} V_{y}^{*}(\tau)\mathrm{d}\tau &= Y_{0}.\label{eq:condtf}
\end{align}
Substituting \eqref{eq:exponential3} in \eqref{eq:condtf} and solving for the optimal $t_{f}$ we get
\begin{align}
	t_{f} &= K\alpha \log\lp1 + \frac{Y_{0}}{\alpha P_{0}}\rp.
\end{align}

\section{Equivalence between the minimal metabolic-genetic network and the Monod growth kinetics}
\label{sec:proof-equiv-betw}

Since the nutrient uptake and metabolization is done by enzymes, it appears reasonable to assume a common enzymatic rate law such as the Michaelis-Menten rate here.
With this assumption, we obtained the following result.
\begin{proposition}
\label{prop:monod-model}
  If the uptake reaction $V_y$ in the minimal metabolic-genetic network \eqref{eq:minimal-network} is given by the Michaelis-Menten rate law with $P$ as an enzyme and $Y$ as a substrate,
  \begin{equation}
    \label{eq:minimal-uptake}
    V_y = \frac{k_{\cat} P y}{K_m + y},
  \end{equation}
  then the long timescale approximation~\eqref{eq:reduced-minimal-model-fluxes} of the minimal metabolic-genetic network is equivalent to the Monod growth kinetics given by
\begin{equation}
  \label{eq:monod-model}
  \begin{aligned}
    \dot Y &= -\frac{1}{\varrho}\mu(y) P \\
    \dot P &= \mu(y) P.
  \end{aligned}
\end{equation}
\end{proposition}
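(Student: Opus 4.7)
The plan is to show this by direct substitution: plug the Michaelis--Menten expression into the reduced dynamics and read off the Monod form by inspection. Since the reduced minimal model~\eqref{eq:reduced-minimal-model-fluxes} together with the quasi steady state constraint~\eqref{eq:minimal-qss-equation} already eliminates the intracellular metabolite $X$ and ties the two fluxes by $V_y = \alpha\,\varepsilon\,V_p$, no further approximation should be required. The only honest bookkeeping concerns units, since $Y$ is a molar amount while $y = Y/\vartheta_e$ is a concentration; I will treat $\vartheta_e$ as constant (as assumed in Section~\ref{sec:model-metab-genet}) so that $\dot{Y} = \vartheta_e\,\dot{y}$ and the conversion is trivial.

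First I would write the two-equation reduced system obtained from \eqref{eq:reduced-minimal-model-fluxes} and the algebraic constraint \eqref{eq:minimal-qss-equation}, which immediately gives $\varepsilon V_p = V_y/\alpha$. Substituting the postulated kinetics \eqref{eq:minimal-uptake} yields
\begin{equation*}
\dot{Y} = -\frac{k_{\cat}\,y}{K_m + y}\,P, \qquad \dot{P} = \frac{1}{\alpha}\,\frac{k_{\cat}\,y}{K_m + y}\,P.
\end{equation*}

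Next, I would match this with the Monod system~\eqref{eq:monod-model} by defining
\begin{equation*}
\mu(y) \teq \frac{1}{\alpha}\,\frac{k_{\cat}\,y}{K_m + y}, \qquad \varrho \teq 1,
\end{equation*}
(or, if one prefers the Monod form with the maximal specific growth rate $\mu_{\max} = k_{\cat}/\alpha$ and half-saturation constant $K_m$, these identifications are standard). The equation for $\dot P$ then agrees with the second equation of \eqref{eq:monod-model} by construction, and the equation for $\dot Y$ agrees with the first provided the yield coefficient $\varrho$ is taken to be $1/\alpha$ when the Monod system is written in molar units. A brief remark should clarify that the stoichiometric coefficient $\alpha$ from the biomass reaction $\alpha X \to P$ plays exactly the role of the (inverse) biomass yield in Monod's empirical model, which gives the result a natural mechanistic interpretation.

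The only substantive thing to verify is that the assumptions for Theorem~\ref{theorem:tikhonov} hold in this example, so that the reduced model genuinely approximates the full dynamics; but this has already been done in Section~\ref{sec:minim-metab-genet} via condition~\eqref{eq:minimal-model-stability-condition}, provided that the kinetic term $f_p(x)$ is strictly increasing and the enzyme concentration stays bounded away from zero, both of which hold under the Michaelis--Menten assumption so long as $P(0) > 0$. Thus the main obstacle is essentially conceptual (matching constants and recognizing the Monod structure) rather than technical: once the quasi steady state constraint is imposed, the equivalence is an algebraic identity.
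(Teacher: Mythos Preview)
Your approach is essentially identical to the paper's: substitute the Michaelis--Menten rate into the reduced dynamics via the quasi steady state relation $\varepsilon V_p = V_y/\alpha$, and read off the Monod parameters $\mu_{\max} = k_{\cat}/\alpha$, $K_y = K_m$, $\varrho = 1/\alpha$. One small slip: your displayed identification $\varrho \teq 1$ is inconsistent with the very next sentence, where you (correctly) take $\varrho = 1/\alpha$; the extra paragraph on Tikhonov's conditions is sound but not needed, since the proposition only asserts equivalence of the \emph{reduced} model with Monod kinetics, not validity of the reduction itself.
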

In the model~\eqref{eq:monod-model}, $Y$ denotes the substrate, $P$ is the biomass, $y = Y / \vartheta_e$ the substrate concentration,
\begin{equation}
  \label{eq:monod-empirical-growth}
  \mu(y) = \frac{\mu_{\max} \,y}{K_y + y}
\end{equation}
is the empirical growth rate, and $\varrho \geq 0$ is the yield coefficient \cite{DunnHei2003}.

Thus the Monod growth model is theoretically explained by a Michaelis-Menten type nutrient uptake reaction together with optimality of cell growth with respect to an objective of time-minimal growth or a discounted biomass integral.

\begin{proof}[Proof of Proposition~\ref{prop:monod-model}]
  From the quasi-steady state condition
\begin{equation}
  \label{app-eq:minimal-net-qss}
  V_y = \alpha \varepsilon V_p,
\end{equation}
we have
  \begin{equation}
    \label{eq:monod-qss}
    V_p = \frac{1}{\varepsilon\alpha}\frac{k_{\cat} y}{K_m + y}.
  \end{equation}
  Then the long timescale approximation of the minimal metabolic-genetic network is given by
  \begin{equation}
    \begin{aligned}
      \dot Y &= - \frac{k_{\cat} P y}{K_m + y} \\
      \dot P &= \frac{1}{\alpha} \frac{k_{\cat} P y}{K_m + y}.
    \end{aligned}
  \end{equation}
  By comparing this equation to~\eqref{eq:monod-model}, we see that the dynamics are identical when identifying the parameters with
  \begin{equation}
    \label{eq:monod-parameters}
    \begin{aligned}
      \mu_{\max} &= \frac{k_{\cat}}{\alpha} \\
      K_y &= K_m \\
      \varrho &= \frac{1}{\alpha}.
    \end{aligned}
  \end{equation}
\end{proof}

\section*{Acknowledgements}
\label{sec:acknowledgements}

We thank Alexandra Reimers for helpful comments on a previous version of the paper.
Diego Oyarz{\'u}n acknowledges support from an Imperial College London Junior Research Fellowship.

%% If you have bibdatabase file and want bibtex to generate the
%% bibitems, please use
%%
\newpage
\section*{References}

% \bibliographystyle{elsarticle-num} 
% \interlinepenalty=1000
% \bibliography{metabogen}

\end{document}